\documentclass[11pt]{article}

\usepackage[utf8]{inputenc}
\usepackage[T1]{fontenc}
\usepackage{amsmath,amssymb,amsthm}
\usepackage{mathtools}
\usepackage[colorlinks=true,linkcolor=blue!60!black,citecolor=blue!60!black,urlcolor=blue!60!black]{hyperref}
\usepackage{cleveref}
\usepackage{enumitem}
\usepackage{booktabs}
\usepackage{listings}
\usepackage{xcolor}
\usepackage[margin=1in]{geometry}

\definecolor{leanblue}{RGB}{0,0,180}
\definecolor{leangreen}{RGB}{0,128,0}
\definecolor{leangray}{RGB}{100,100,100}
\lstdefinelanguage{Lean4}{
  keywords={theorem,def,lemma,import,open,noncomputable,section,end,
    variable,namespace,where,by,exact,apply,intro,have,let,show,
    unfold,simp,rw,calc,fun,match,with,if,then,else,structure,class,
    instance,inductive,axiom,set},
  keywordstyle=\color{leanblue}\bfseries,
  comment=[l]{--},
  morecomment=[s]{/-}{-/},
  commentstyle=\color{leangreen}\itshape,
  stringstyle=\color{leangray},
  basicstyle=\ttfamily\small,
  breaklines=true,
  showstringspaces=false,
  columns=flexible,
}
\newtheorem{theorem}{Theorem}[section]

\newtheorem{proposition}[theorem]{Proposition}
\newtheorem{corollary}[theorem]{Corollary}
\theoremstyle{definition}
\newtheorem{definition}[theorem]{Definition}
\newtheorem{remark}[theorem]{Remark}

\newcommand{\lean}[1]{\lstinline[language=Lean4,breaklines=true,breakatwhitespace=false]{#1}}
\newcommand{\RR}{\mathbb{R}}

\newcommand{\ZZ}{\mathbb{Z}}
\newcommand{\dif}{\mathrm{d}}
\newcommand{\extd}{\mathrm{d}}
\newcommand{\bdry}{\partial}
\newcommand{\Icc}[2]{[#1, #2]}
\newcommand{\mathlib}{\texttt{mathlib4}}

\title{Stokes' Theorem for Smooth Singular Cubes in Lean~4: \\
       True Pullback, Bridges to \mathlib{}, and Chain-Level $\partial^2{=}0$}

\author{David B.\ Hulak\thanks{Independent Researcher. ORCID: 0009-0002-8056-1774.}
\and Arthur F.\ Ramos\thanks{Microsoft. ORCID: 0009-0003-3568-0325.}
\and Ruy J.\ G.\ B.\ de Queiroz\thanks{Centro de Inform\'atica, Universidade Federal de Pernambuco. ORCID: 0000-0003-1482-0977.}}
\date{July 2026}

\begin{document}
\maketitle

\begin{abstract}
We present a sorry-free Lean~4/\mathlib{} formalization of Stokes' theorem for
\emph{smooth singular cubes} in arbitrary dimension:
\[
  \int_{[0,1]^{n+1}} \sigma^*(\extd\omega)
  = \sum_{i=0}^{n} (-1)^i \Bigl(
    \int_{\mathrm{face}_i(1)} \sigma^*\omega
    - \int_{\mathrm{face}_i(0)} \sigma^*\omega
  \Bigr)
\]
for any globally $C^\infty$ map $\sigma : \RR^{n+1} \to \RR^m$ and any $C^\infty$
$n$-form $\omega$ on $\RR^m$. Here $\sigma^*$ is the \emph{true} differential-form
pullback via the Fr\'echet derivative,
$(\sigma^*\omega)(x)(v_1,\ldots,v_n) = \omega(\sigma(x))(D\sigma(x)\cdot v_1,
\ldots, D\sigma(x)\cdot v_n)$. The proof reduces to box Stokes on $[0,1]^{n+1}$
via \mathlib{}'s \lean{extDeriv_pullback} (giving
$\extd(\sigma^*\omega) = \sigma^*(\extd\omega)$) and a chain-rule face-matching
identity that we formalize.

We additionally formalize: a \emph{bridge theorem} relating our coordinate
representation to \mathlib{}'s abstract \lean{extDeriv}; chain-level Stokes
extended by $\ZZ$-linearity to finite formal combinations of singular cubes;
the chain-complex identity $\partial(\partial\sigma) = 0$, proved via a
sign-reversing involution on the double-boundary index set; an underlying box
Stokes theorem for axis-aligned cubes (reduced to \mathlib{}'s divergence
theorem); dimensional specializations (FTC, rectangular Green, 3D~Gauss,
divergence consistency, integration by parts, Leibniz); and a structured
comparison with Harrison's HOL~Light formalization~\cite{harrison-stokes} of
Stokes for convex/polyhedral domains.

The development comprises 44 Lean modules (4028 source lines,
205 named declarations), is verified with zero \texttt{sorry} uses, and depends
only on the three standard Lean~4 axioms (\texttt{propext},
\texttt{Classical.choice}, \texttt{Quot.sound}), confirmed by 79 systematic
\texttt{\#print axioms} checks.
\end{abstract}

\noindent\textbf{MSC 2020:} 58A10 (Differential forms), 68V20 (Formalization of mathematics).\\
\textbf{Keywords:} Stokes' theorem, formal proof, Lean~4, mathlib, differential forms,
singular cubes, exterior derivative, pullback.

\paragraph{What this paper proves, briefly.}
For any globally $C^\infty$ map $\sigma : \RR^{n+1} \to \RR^m$ and any $C^\infty$
$n$-form $\omega$ on $\RR^m$, integrating the pullback of $\extd\omega$ over the
unit cube equals the alternating sum of pullback integrals over the $2(n{+}1)$
faces---formalized in Lean~4 with zero \lean{sorry} and only the standard
foundational axioms. The precise scope and limitations, including the relationship
to manifold Stokes, are discussed in \Cref{sec:discussion}.

\begin{table}[ht]
\centering\small
\begin{tabular}{lccc}
\toprule
\textbf{Feature} & \textbf{This paper} & \textbf{Harrison (HOL)} & \textbf{Manifold Stokes} \\
\midrule
Singular cubes             & Yes & No (polyhedral) & Via charts \\
Pullback of forms          & Yes (\lean{fderiv}) & Yes & Yes \\
Chain-level $\partial$     & Yes ($\partial^2{=}0$) & Yes (polyhedral) & Yes \\
Library integration        & Lean 4/\mathlib{} & HOL Light & N/A \\
Integration domain         & Parameter cube & Convex polytopes & Oriented manifold \\
Global smoothness required & Yes & Less restrictive & Local suffices \\
Smooth manifolds           & No  & No (convex/polyhedral) & Yes \\
Charts / atlases           & No  & No & Yes \\
Partition of unity         & No  & No & Often needed \\
\bottomrule
\end{tabular}
\caption{Scope comparison: this work is a cubical-chain formalization with true pullback,
not a manifold-level Stokes theorem. Rows are ordered by coverage in this paper.}
\label{tab:scope}
\end{table}

\section{Introduction}
\label{sec:intro}

The generalized Stokes' theorem~\cite{spivak,lee-manifolds} is one of the central results in
differential geometry,
unifying the fundamental theorem of calculus, Green's theorem, the classical divergence
theorem, and the curl theorem into a single statement: for an oriented smooth manifold~$M$
with boundary and a differential form~$\omega$ of appropriate degree,
\begin{equation}\label{eq:stokes}
  \int_M \extd\omega = \int_{\bdry M} \omega
\end{equation}
Prior to this work, to our knowledge no complete formalization of the
differential-form Stokes theorem (for smooth manifolds with boundary, singular chains,
or comparable generality) exists in Lean~4~\cite{lean4}/\mathlib{}~\cite{mathlib,mathlib-community}
as of April~2026.
Harrison~\cite{harrison-stokes} formalized Stokes' theorem in HOL Light for
convex and polyhedral domains, building on the Euclidean space
foundations of~\cite{Harrison2005}. Harrison's published development does not
cover smooth manifolds with boundary
(see \Cref{sec:harrison} for a detailed comparison).

In this paper, we present a sorry-free Lean~4 formalization of Stokes' theorem
at two levels:
\begin{enumerate}
  \item \textbf{Smooth singular cubes}: For any $C^\infty$ map
    $\sigma : \RR^{n+1} \to \RR^m$ and smooth $n$-form $\omega$ on $\RR^m$,
    the integral of the pullback $\sigma^*(\extd\omega)$ over the unit cube
    equals the oriented sum of face integrals of $\sigma^*\omega$.
    The pullback is the \emph{true} differential-form pullback, using the
    Fr\'echet derivative of $\sigma$.
  \item \textbf{Axis-aligned boxes}: For any $C^\infty$ coordinate $n$-form
    on~$\RR^{n+1}$ and any box $\Icc{a}{b}$, the integral of $\extd\omega$
    (evaluated on the standard frame) equals the boundary integral.
\end{enumerate}

The singular cube theorem (\lean{singularStokes}) allows smooth maps
$\sigma : \RR^{n+1} \to \RR^m$ that can parametrize curves, surfaces, and
higher-dimensional objects (possibly degenerate or self-overlapping), with
integration always over the parameter cube---broader as a
parametrized-chain statement than axis-aligned boxes alone, though still
short of smooth manifolds with boundary or image-domain theory.

\subsection{Contributions}

Our central contribution is the singular-cube theorem; the box, bridge, and
chain results form the technical infrastructure that makes the singular case
go through.

\paragraph{Headline contributions.}
\begin{enumerate}[label=(\roman*),leftmargin=*]
  \item \textbf{Stokes for smooth singular cubes} with true pullback via
    \lean{fderiv} (\Cref{sec:singular-stokes}). The proof reduces to box
    Stokes via \mathlib{}'s \lean{extDeriv_pullback}, which gives
    $\extd(\sigma^*\omega) = \sigma^*(\extd\omega)$, and a chain-rule
    \emph{face-matching identity} (\Cref{thm:face-matching}) that we formalize.
  \item \textbf{Chain-level $\partial^2{=}0$} for singular cubical chains, proved
    via a sign-reversing fixed-point-free involution on the double-boundary
    index set (\Cref{thm:bdry2-chain}); chain-level Stokes is also formalized
    by extending the singular theorem $\ZZ$-linearly over \lean{Finsupp}.
  \item A \textbf{formal bridge} from our coordinate representation to
    \mathlib{}'s abstract \lean{extDeriv} (\Cref{sec:bridge}), proving that our
    coordinate formula is the evaluation of the abstract exterior derivative
    on standard basis vectors, and a unified statement (\lean{stokes_extDeriv_smooth})
    that needs only one \lean{ContDiff} hypothesis.
\end{enumerate}

\paragraph{Supporting infrastructure.}
\begin{enumerate}[label=(\roman*),resume,leftmargin=*]
  \item A \textbf{sorry-free} proof of box Stokes for axis-aligned cubes in
    arbitrary dimension $n+1$ (\Cref{sec:main-theorem}), reducing to
    \mathlib{}'s divergence theorem~\cite{kudryashov-divergence} via a
    signed-coefficient construction (short proof body; the work is in the definitions).
  \item \textbf{Box chain infrastructure}: $\ZZ$-linear combinations of boxes
    with Stokes extended by linearity, two-box additivity, and shared-face
    cancellation (\Cref{sec:chains}).
  \item \textbf{Box-level $\partial^2{=}0$ sign and face cancellation}: for
    each codimension-2 face, the two boundary paths produce the same face box
    with opposite signs---proved without defining $\partial$ as a chain
    homomorphism on undecorated boxes.
  \item \textbf{Dimensional specializations}: FTC ($n{=}0$), rectangular
    Green's theorem ($n{=}1$, with four-edge decomposition), divergence
    consistency, and 3D Gauss; plus calculus corollaries (integration by
    parts, Leibniz product rule). All are direct instantiations of
    \lean{stokes_smooth}, not independent derivations.
  \item A \textbf{1D agreement check} (\Cref{sec:validation}) showing that
    the Stokes-derived FTC equals \mathlib{}'s interval-integral FTC.
  \item A \textbf{structured comparison} with Harrison's HOL Light Stokes for
    convex/polyhedral domains (\Cref{sec:harrison}), and an honest assessment
    of what remains for full manifold Stokes (\Cref{sec:discussion}).
\end{enumerate}

\paragraph{Theorem dependency transparency.}
\lean{stokes_on_box} is a direct wrapper around \mathlib{}'s divergence theorem
(the mathematical content is the signed-coefficient construction).
\lean{stokes_smooth} derives hypotheses from $C^\infty$ and calls \lean{stokes_on_box}.
\lean{singularStokes} reduces to \lean{stokes_smooth} via \lean{extDeriv_pullback}
and the face-matching identity.
\lean{stokes_singular_boundary} restates \lean{singularStokes} with the boundary
integral on the left; \lean{stokes_singular_chain} is the corresponding
formal-chain version using \lean{singularBoundarySingle}.
\lean{stokes_chain} extends by $\ZZ$-linearity
(\lean{Finsupp.induction} for singular chains, \lean{Finsupp.sum_congr} for box chains).
The FTC, Green, divergence, and Gauss results are dimensional instantiations of
\lean{stokes_smooth}.
\lean{dd_zero_abstract} is a direct invocation of \mathlib{}'s \lean{extDeriv_extDeriv}.
\lean{bdry_sq_sign_cancel} is independent sign arithmetic (no analysis dependency;
uses only \lean{omega} and ring lemmas).

\subsection{Relation to Existing Work}

\paragraph{Harrison's HOL Light formalization.}
Harrison~\cite{harrison-stokes} formalized a version of Stokes' theorem in HOL Light
for \emph{convex and polyhedral domains} in $\RR^N$, extended to polyhedral chains.
Contrary to what some summaries suggest, Harrison's formalization does \emph{not}
cover arbitrary smooth manifolds with boundary; it operates on convex polytopes with
the Henstock--Kurzweil gauge integral.
With our singular cube extension, the mathematical scope of the two developments
is \emph{complementary but different}: Harrison proves Stokes for convex/polyhedral domains
in $\RR^n$ extended to polyhedral chains,
while our work uses smooth singular cubical parametrizations from the unit cube.
The key infrastructure differences remain:
(1)~Harrison works in HOL's simple type theory; we use Lean~4's dependent types;
(2)~Harrison builds all integration from scratch; we reduce to \mathlib{}'s
divergence theorem;
(3)~we provide a bridge to \mathlib{}'s abstract \lean{extDeriv} and use
\lean{extDeriv_pullback} for the singular theorem; we are not aware of a
directly comparable bridge to an independently developed abstract
exterior-derivative API in HOL~Light, since HOL~Light does not maintain such
an independent abstract exterior algebra alongside its integration theory.
A detailed structural comparison appears in \Cref{sec:harrison}.

\paragraph{Mathlib infrastructure.}
As of the \mathlib{} version pinned in this project (April 2026), the library provides:
\begin{itemize}
  \item The box divergence theorem (\lean{integral_divergence_of_hasFDerivAt_off_countable'}),
    which uses the Bochner set integral despite its historical connection with
    Henstock--Kurzweil-style divergence theorems~\cite{kudryashov-divergence}.
    Our box Stokes is a direct wrapper around this result.
  \item The abstract exterior derivative (\lean{extDeriv}) for differential forms~\cite{kudryashov-extDeriv}
  \item Naturality of pullback (\lean{extDeriv_pullback})~\cite{kudryashov-extDeriv}
  \item Continuous alternating maps (\lean{ContinuousAlternatingMap})
  \item Smooth manifold infrastructure (\lean{SmoothManifoldWithCorners})~\cite{gouezel-manifolds}
\end{itemize}
However, \mathlib{} does \emph{not} yet provide:
integration of differential forms on manifolds,
orientation theory for manifolds with boundary,
or a differential-form Stokes theorem comparable to the present statement
(though it does provide the box divergence theorem
from which our box Stokes reduces).
Our work provides the first Lean~4 formalization of Stokes' theorem
for smooth singular cubes, contributing to the
growing body of Lean~4 formalizations reported in venues such as
AFM~\cite{best-flt} and ITP~\cite{van-doorn-itp}.

\paragraph{Broader ITP landscape.}
Beyond Harrison's HOL Light work, integration and measure theory have been formalized
extensively in Isabelle/HOL by H\"olzl, Immler, and others~\cite{hoelzl-measure},
providing foundational infrastructure for Lebesgue integration. The closest
non-HOL-Light formalization of a Stokes-type theorem is Abdulaziz and
Paulson's Isabelle/HOL formalization of \emph{Green's theorem on
2-dimensional cells}~\cite{abdulaziz-green}, which is a 2D specialization of
Stokes; we extend that line of work to arbitrary dimension and to true
differential-form pullback under smooth singular parametrizations.
We are not aware of a Coq, Mizar, Metamath, or HOL4 formalization of Stokes'
theorem for singular cubes or smooth manifolds with boundary.

\paragraph{Paper organization.}
\Cref{sec:singular-stokes} presents the main singular Stokes theorem first for
motivation; it depends on the box infrastructure of \S\S\ref{sec:math}--\ref{sec:bridge},
which is developed subsequently.
\S\S\ref{sec:classical}--\ref{sec:dd-zero} give specializations and extensions.
\S\S\ref{sec:architecture}--\ref{sec:discussion} discuss architecture, insights, and limitations.

\section{Smooth Singular Cubical Stokes}
\label{sec:singular-stokes}

The main result of this paper is Stokes' theorem for smooth singular
cubes---a generalization of the box Stokes theorem that allows the integrand
to come from a globally smooth map $\sigma:\RR^{n+1}\to\RR^m$, with the
integral still taken over the parameter cube $[0,1]^{n+1}$ (the box case is
recovered as $\sigma=\mathrm{id}$ on $\RR^{n+1}$). No image-domain theory is
formalized.

\begin{remark}[On the term ``smooth singular cube'']\label{rem:terminology}
In the algebraic-topology literature, a singular $n$-cube is usually a continuous
map from the standard cube $I^n$ into a topological space; smoothness is
imposed only on $I^n$. \emph{Our formal objects are globally smooth extensions}:
the map $\sigma$ is required to be $C^\infty$ on \emph{all of $\RR^{n+1}$},
and integration is over $[0,1]^{n+1}$. This is mathematically stronger than
the conventional definition---any map smooth on the closed cube with compatible
jet data admits a smooth extension by Whitney's theorem~\cite{whitney}, but we impose the
extension as a hypothesis rather than constructing it. This choice substantially
simplifies the formalization (\Cref{sec:insights}); we discuss the trade-off
explicitly in \Cref{sec:discussion}.
\end{remark}

\subsection{Smooth Singular Cubes}

\begin{definition}[Smooth singular cube]\label{def:singular-cube}
  A \emph{smooth singular $(n{+}1)$-cube in $\RR^m$} is a globally $C^\infty$ map
  $\sigma : \RR^{n+1} \to \RR^m$, integrated over the unit cube $[0,1]^{n+1}$.
\end{definition}

In Lean~4:\footnote{Lean code excerpts use ASCII equivalents of Unicode operators
(e.g., \texttt{->} for $\to$, \texttt{forall} for $\forall$) for typesetting
compatibility. The repository uses standard Unicode Lean~4 syntax.}
\begin{lstlisting}
structure SmoothSingularCube (n m : Nat) where
  toFun : (Fin n -> Real) -> (Fin m -> Real)
  smooth : ContDiff Real Top toFun
\end{lstlisting}

\begin{remark}
In the Lean code, \lean{SmoothSingularCube n m} takes the cube dimension directly:
an $(n{+}1)$-cube has type \lean{SmoothSingularCube (n+1) m} with domain $\mathrm{Fin}(n{+}1) \to \RR$.
The smoothness hypothesis is \emph{global} ($C^\infty$ on all of $\RR^{n+1}$),
not just on the unit cube. This is a stronger hypothesis than strictly necessary
but simplifies the formalization considerably, as \lean{ContDiff} at the top level
derives all needed differentiability properties automatically.
\end{remark}

\begin{definition}[Face inclusion and singular faces]
  The \emph{face inclusion} $\iota_{i,\varepsilon} : \RR^n \to \RR^{n+1}$ inserts the
  constant $\varepsilon$ at position~$i$:
  \[
    \iota_{i,\varepsilon}(t)_j =
    \begin{cases}
      t_j & \text{if } j < i \\
      \varepsilon & \text{if } j = i \\
      t_{j-1} & \text{if } j > i
    \end{cases}
  \]
  The \emph{$i$-th face at $\varepsilon$} of a singular cube $\sigma$ is the composition
  $\sigma \circ \iota_{i,\varepsilon}$, itself a smooth singular $n$-cube.
\end{definition}

\begin{lstlisting}
def faceInclusion (i : Fin (n + 1)) (eps : Real) (t : Fin n -> Real) :
    Fin (n + 1) -> Real := ...

def singularFace (sigma : SmoothSingularCube (n + 1) m)
    (i : Fin (n + 1)) (eps : Real) : SmoothSingularCube n m where
  toFun := sigma.toFun comp faceInclusion i eps
  smooth := sigma.smooth.comp (faceInclusion_contDiff i eps)
\end{lstlisting}

\subsection{True Pullback via Fr\'echet Derivative}

The central construction of the singular cube formalization is the use of the
\emph{true} differential-form pullback, defined via the Fr\'echet derivative.
We write $d$ for the domain dimension.
In the Stokes application, $d = n{+}1$ and $k = n$, but the pullback definition is general:

\begin{definition}[Pullback form]\label{def:pullback}
  For a smooth singular cube $\sigma : \RR^d \to \RR^m$ and a $k$-form
  $\omega$ on $\RR^m$, the pullback $\sigma^*\omega$ is the $k$-form on $\RR^d$ defined by:
  \[
    (\sigma^*\omega)(x)(v_1, \ldots, v_k) =
      \omega(\sigma(x))(D\sigma(x) \cdot v_1, \ldots, D\sigma(x) \cdot v_k)
  \]
  where $D\sigma(x) = \mathrm{fderiv}\;\RR\;\sigma\;x$ is the Fr\'echet derivative.
\end{definition}

In Lean~4, this is expressed via \lean{ContinuousAlternatingMap.compContinuousLinearMap}:%
\footnote{The type \texttt{E [Alt\^{}Fin k]-\textgreater{}L[R] F} denotes
\lean{ContinuousAlternatingMap R E F (Fin k)}---a continuous $k$-linear alternating map.}
\begin{lstlisting}
def pullbackForm (sigma : SmoothSingularCube d m)
    (omega : (Fin m -> Real) ->
         (Fin m -> Real) [Alt^Fin k]->L[Real] Real) :
    (Fin d -> Real) -> (Fin d -> Real) [Alt^Fin k]->L[Real] Real :=
  fun x => (omega (sigma.toFun x)).compContinuousLinearMap
              (fderiv Real sigma.toFun x)
\end{lstlisting}

This is genuinely the differential-geometric pullback: it applies the form $\omega$
at the image point $\sigma(x)$ to vectors pushed forward by the derivative $D\sigma(x)$.
This contrasts with the ``coefficient precomposition'' $(A \cdot \omega)_i(x) := \omega_i(Ax)$
used in the box infrastructure, which does not mix coefficients or involve determinants.

\begin{definition}[Integration of forms over singular cubes]
  The integral of a $d$-form $\omega$ over a singular $d$-cube $\sigma$ is:
  \[
    \int_\sigma \omega := \int_{[0,1]^d}
      (\sigma^*\omega)(x)(e_0, \ldots, e_{d-1})\, \dif x
  \]
  where $e_j = \mathrm{Pi.single}\;j\;1$ are the standard basis vectors (zero-indexed).
\end{definition}

\begin{lstlisting}
def integrateForm (sigma : SmoothSingularCube d m)
    (omega : (Fin m -> Real) ->
         (Fin m -> Real) [Alt^Fin d]->L[Real] Real) : Real :=
  integral (x in Icc (fun _ => 0) (fun _ => 1))
    (pullbackForm sigma omega x) (fun j => Pi.single j 1)
\end{lstlisting}

\subsection{Proof Strategy}

The proof of singular Stokes proceeds in three steps:

\paragraph{Step 1: Naturality of pullback.}
The key identity is \emph{commutativity of pullback with the exterior derivative}:
\[
  \extd(\sigma^*\omega) = \sigma^*(\extd\omega)
\]
In Lean~4, this follows directly from \mathlib{}'s \lean{extDeriv_pullback}:

\begin{lstlisting}
theorem pullback_extDeriv (sigma : SmoothSingularCube (n + 1) m)
    (omega : ...) (h_omega : ContDiff Real Top omega)
    (x : Fin (n + 1) -> Real) :
    extDeriv (pullbackForm sigma omega) x =
    (extDeriv omega (sigma.toFun x)).compContinuousLinearMap
      (fderiv Real sigma.toFun x)
\end{lstlisting}

\paragraph{Step 2: Apply box Stokes to $\sigma^*\omega$.}
Since $\sigma^*\omega$ is a smooth differential form on $\RR^{n+1}$, the existing
box Stokes theorem (\lean{stokes_extDeriv}, \Cref{sec:main-theorem}) applies on $[0,1]^{n+1}$:
\[
  \int_{[0,1]^{n+1}} \extd(\sigma^*\omega)
  \;=\; \int_{\partial[0,1]^{n+1}} \sigma^*\omega
\]
(in Lean: \lean{boxIntegral (extDerivCoord ...) = bdryIntegral ...}).
Smoothness of the coordinate coefficients of $\sigma^*\omega$ is proved via
\lean{toCoordNForm_pullback_isSmooth}, which derives it from the global
$C^\infty$ hypotheses on $\sigma$ and $\omega$ using the chain rule for
\lean{ContDiff}.

\paragraph{Step 3: Face matching.}
The face matching identity connects the box boundary integral of $\sigma^*\omega$
to the face integrals of the singular cube. The core lemma is:

\begin{theorem}[Face matching]\label{thm:face-matching}
  For $\sigma$, $\omega$, face index $i$, value $\varepsilon$, and point $x \in \RR^n$,
  writing $\mathrm{succAbove}(i,k)$ for the $k$-th element of $\{0,\ldots,n\} \setminus \{i\}$
  in increasing order (the order-preserving injection
  $\{0,\ldots,n{-}1\} \hookrightarrow \{0,\ldots,n\} \setminus \{i\}$):
  \[
    (\sigma^*\omega)(\iota_{i,\varepsilon}(x))(e_{\mathrm{succAbove}(i,0)}, \ldots,
      e_{\mathrm{succAbove}(i,n-1)})
    = ((\sigma \circ \iota_{i,\varepsilon})^*\omega)(x)(e_0, \ldots, e_{n-1})
  \]
\end{theorem}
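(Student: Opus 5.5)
Both sides unfold to $\omega$ evaluated at the same base point, so the plan is to reduce everything to an equality of argument tuples. Unfolding \lean{pullbackForm}, the left side is
\[
\omega\bigl(\sigma(\iota_{i,\varepsilon}(x))\bigr)\bigl(D\sigma(\iota_{i,\varepsilon}(x))\,e_{\mathrm{succAbove}(i,0)},\ldots,D\sigma(\iota_{i,\varepsilon}(x))\,e_{\mathrm{succAbove}(i,n-1)}\bigr),
\]
and the right side is
\[
\omega\bigl((\sigma\circ\iota_{i,\varepsilon})(x)\bigr)\bigl(D(\sigma\circ\iota_{i,\varepsilon})(x)\,e_0,\ldots,D(\sigma\circ\iota_{i,\varepsilon})(x)\,e_{n-1}\bigr).
\]
The base points agree definitionally. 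So after \lean{congr} on the alternating map, it suffices to show, for each $k\in\{0,\ldots,n-1\}$, that
\[
D(\sigma\circ\iota_{i,\varepsilon})(x)\,e_k = D\sigma(\iota_{i,\varepsilon}(x))\,e_{\mathrm{succAbove}(i,k)} .
\]

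First apply the chain rule (\lean{fderiv_comp}). Its hypotheses are differentiability of $\sigma$ at $\iota_{i,\varepsilon}(x)$, which follows from \lean{sigma.smooth.differentiable}, and differentiability of $\iota_{i,\varepsilon}$, which follows from \lean{faceInclusion_contDiff}. This rewrites the left side of the displayed equality as $D\sigma(\iota_{i,\varepsilon}(x))\bigl(D\iota_{i,\varepsilon}(x)\,e_k\bigr)$. It then remains to compute $D\iota_{i,\varepsilon}(x)\,e_k = e_{\mathrm{succAbove}(i,k)}$.

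To get this, I would observe that $\iota_{i,\varepsilon}$ is affine. Concretely, $\iota_{i,\varepsilon}(t) = L(t) + \varepsilon\, e_i$, where $L$ is the continuous linear map $t \mapsto (j \mapsto [j\ne i]\; t_{\text{pred}(j)})$. In Lean I would express $L$ via \lean{Fin.insertNth} with a $0$ inserted, or equivalently componentwise through \lean{Fin.succAbove}. The derivative of an affine map is its linear part, so $D\iota_{i,\varepsilon}(x) = L$, and one then checks $L\,e_k = e_{\mathrm{succAbove}(i,k)}$ coordinatewise. In the coordinate check there are two cases for each $j$:
\begin{itemize}
\item If $j = i$, both sides are $0$, because $\mathrm{succAbove}(i,k)\neq i$.
\item If $j = \mathrm{succAbove}(i,l)$, the left side is $(e_k)_l = [l=k]$. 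The right side is also $[l=k]$, by injectivity of $\mathrm{succAbove}(i,\cdot)$.
\end{itemize}

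The main obstacle is the step $D\iota_{i,\varepsilon}(x)\,e_k = e_{\mathrm{succAbove}(i,k)}$, because the face inclusion is defined by a case split on $j<i$, $j=i$, $j>i$, and that definition does not align directly with \mathlib{}'s \lean{Fin.succAbove}/\lean{Fin.insertNth} API. I would first try to prove a rewriting lemma stating that \lean{faceInclusion i eps t} equals \lean{Fin.insertNth i eps t}. After that, I would compute the derivative componentwise using \lean{hasFDerivAt_pi}: each coordinate of $\iota_{i,\varepsilon}$ is either constant or a coordinate projection, and both kinds have known derivatives. The resulting equation then reduces to \lean{Fin.insertNth_apply_succAbove}, \lean{Fin.insertNth_apply_same} and \lean{Pi.single} simp lemmas. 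If the case-split definition resists this rewriting, the fallback is to prove the derivative formula directly by \lean{funext} on coordinates and splitting by \lean{omega} on $j<i$, $j=i$, $j>i$. This is tedious but routine index arithmetic.
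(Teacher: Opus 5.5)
Your proposal is correct and follows essentially the same route as the paper. The paper also applies the chain rule to $D(\sigma\circ\iota_{i,\varepsilon})(x)$ and then uses the derivative identity $D\iota_{i,\varepsilon}(x)\,e_k = e_{\mathrm{succAbove}(i,k)}$ (formalized as \lean{fderiv_faceInclusion_single}), after which both sides are the same evaluation of $\omega(\sigma(\iota_{i,\varepsilon}(x)))$, matching argument by argument.
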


The proof uses the chain rule: $D(\sigma \circ \iota_{i,\varepsilon})(x)
= D\sigma(\iota_{i,\varepsilon}(x)) \circ D\iota_{i,\varepsilon}(x)$,
together with the derivative identity
$D\iota_{i,\varepsilon}(x) \cdot e_j = e_{\mathrm{succAbove}(i,j)}$
(proved as \lean{fderiv_faceInclusion_single}).
The alternating map's multilinearity then transforms the $n$ basis vectors
$e_{\mathrm{succAbove}(i,0)}, \ldots, e_{\mathrm{succAbove}(i,n{-}1)}$
applied inside $\sigma^*\omega$ into $e_0, \ldots, e_{n-1}$ applied inside
$(\sigma \circ \iota_{i,\varepsilon})^*\omega$.

\subsection{The Main Theorem}

\begin{theorem}[Singular cubical Stokes, \lean{singularStokes}]\label{thm:singular-stokes}
  Let $\sigma : \RR^{n+1} \to \RR^m$ be a smooth singular $(n{+}1)$-cube and
  $\omega$ a $C^\infty$ $n$-form on $\RR^m$. Then:
  \[
    \int_\sigma \extd\omega
    = \sum_{i=0}^{n} (-1)^i \Bigl(
      \int_{\mathrm{face}_i(1)} \omega
      - \int_{\mathrm{face}_i(0)} \omega
    \Bigr)
  \]
  where all integrals use the true pullback and the unit cube domain.
\end{theorem}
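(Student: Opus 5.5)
We chain the three steps of the proof strategy, working entirely on the parameter cube. First, unfold $\int_\sigma \extd\omega$ as $\int_{[0,1]^{n+1}} (\sigma^*(\extd\omega))(x)(e_0,\ldots,e_n)\,\dif x$. By Step~1 (\lean{pullback_extDeriv}, i.e.\ \mathlib{}'s \lean{extDeriv_pullback}), the integrand equals $\extd(\sigma^*\omega)(x)(e_0,\ldots,e_n)$ pointwise. We rewrite under the integral by \lean{integral_congr} (pointwise equality suffices, so no measurability side conditions arise). The bridge theorem of \Cref{sec:bridge} identifies $\extd(\sigma^*\omega)(x)$ evaluated on the standard frame with the coordinate expression \lean{extDerivCoord} applied to the coordinate $n$-form \lean{toCoordNForm} of $\sigma^*\omega$. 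This turns the left-hand side into the box integral of the coordinate exterior derivative of $\sigma^*\omega$.

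Second, we apply box Stokes (\lean{stokes_extDeriv} / \lean{stokes_smooth}) on $[0,1]^{n+1}$ to $\sigma^*\omega$. The only hypothesis is smoothness of its coordinate coefficients, which is \lean{toCoordNForm_pullback_isSmooth*} from the global $C^\infty$ hypotheses. The result is the box boundary integral
\[
\sum_{i=0}^{n}(-1)^i\Bigl(\int_{[0,1]^n} c_i(\iota_{i,1}(x))\,\dif x-\int_{[0,1]^n} c_i(\iota_{i,0}(x))\,\dif x\Bigr),
\]
where $c_i(y)=(\sigma^*\omega)(y)(e_{\mathrm{succAbove}(i,0)},\ldots,e_{\mathrm{succAbove}(i,n-1)})$ is the $i$-th coordinate coefficient. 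Here we must confirm that the paper's \lean{bdryIntegral} uses exactly this sign convention $(-1)^i$ and this ordering of the remaining basis vectors. If it instead stores coefficients with a built-in sign, we absorb that sign before matching.

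Third, we match each face integral termwise using \Cref{thm:face-matching}. For each $i$, $\varepsilon\in\{0,1\}$ and $x\in[0,1]^n$, the integrand $c_i(\iota_{i,\varepsilon}(x))$ equals $((\sigma\circ\iota_{i,\varepsilon})^*\omega)(x)(e_0,\ldots,e_{n-1})$. By definition this is the integrand of $\int_{\mathrm{face}_i(\varepsilon)}\omega=$ \lean{integrateForm (singularFace sigma i eps) omega}. We close the proof with \lean{Finset.sum_congr} and two further \lean{integral_congr} rewrites, one for each value of $\varepsilon$.

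The main obstacle I expect is the bookkeeping in the second step rather than any analysis. The box boundary integral is phrased through the face parametrization of $\bdry[0,1]^{n+1}$, and the coefficient indexing has to line up syntactically with \lean{faceInclusion} and \lean{Fin.succAbove}. I would first try to prove a small lemma stating that the box face map at index $i$, value $\varepsilon$ coincides definitionally, or via \lean{funext} and a case split on $j<i$, $j=i$, $j>i$, with \lean{faceInclusion i eps}. Once that holds, the rest should reduce to \lean{simp} with the face-matching lemma.
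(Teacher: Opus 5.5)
Your proposal is correct and follows the paper's own three-step argument: naturality via \lean{extDeriv_pullback}, box Stokes applied to $\sigma^*\omega$ (routed through the bridge to \lean{extDerivCoord}, with coefficient smoothness from \lean{toCoordNForm_pullback_isSmooth}), and termwise face matching via \Cref{thm:face-matching}. Your sign worry resolves as expected, since \lean{bdryIntegral} integrates the signed coefficients $(-1)^i\omega_i$ and the factor $(-1)^i$ simply pulls out of each face integral; the bridge's \lean{DifferentiableAt} side condition for $\sigma^*\omega$ likewise follows from global smoothness of $\sigma$ and $\omega$.
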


\begin{proof}[Proof sketch]
\begin{sloppypar}
The argument proceeds in three steps.
(1)~\emph{Naturality}: by \mathlib{}'s
\lean{extDeriv_pullback}, applied to the
globally $C^\infty$ hypotheses on $\sigma$ and $\omega$, we have
$\extd(\sigma^*\omega)(x) = (\sigma^*(\extd\omega))(x)$ pointwise.
The pullback uses \lean{fderiv}, and the hypotheses
are satisfied because $\sigma$ is globally $C^\infty$
and $\omega$ is $C^\infty$ as a
\texttt{ContinuousAlternatingMap}-valued function.
(2)~\emph{Box Stokes}: integrating both sides over $[0,1]^{n+1}$ and applying the
existing box Stokes theorem to $\sigma^*\omega$ (whose coordinate coefficients
are smooth by the chain rule), the left-hand side equals the boundary integral
$\sum_i (-1)^i (F_i(1) - F_i(0))$ where $F_i(\varepsilon)$ is the box-boundary
integral on the $i$-th face at value~$\varepsilon$.
(3)~\emph{Face matching}: the face-matching identity (\Cref{thm:face-matching})
converts each box-boundary term into the corresponding singular-face integral,
matching the right-hand side.
\end{sloppypar}
\end{proof}

\medskip\noindent
The Lean statement:
\begin{lstlisting}
theorem singularStokes (sigma : SmoothSingularCube (n + 1) m)
    (omega : (Fin m -> Real) ->
         (Fin m -> Real) [Alt^Fin n]->L[Real] Real)
    (h_omega : ContDiff Real Top omega) :
    integrateForm sigma (fun y => extDeriv omega y) =
    sum i : Fin (n + 1),
      (-1 : Real) ^ (i : Nat) *
      (integrateForm (singularFace sigma i 1) omega -
       integrateForm (singularFace sigma i 0) omega)
\end{lstlisting}

\paragraph{The theorem applies non-trivially to non-axis-aligned parametrizations.}
The theorem holds for any globally smooth $\sigma$, not only for box-aligned
parametrizations. As an informal illustration (this specific computation is
not formalized; it serves as a paper-only sanity check), take the polar
parametrization of an annulus,
\[
  \sigma : \RR^2 \to \RR^2,\qquad
  \sigma(r,\theta) =
  \bigl((1+r)\cos(2\pi\theta),\; (1+r)\sin(2\pi\theta)\bigr),
\]
together with the 1-form
$\omega = \tfrac{1}{2}(x\,\dif y - y\,\dif x)$ on $\RR^2$ (so
$\extd\omega = \dif x\wedge\dif y$). Because $\sigma$ is real-analytic on all
of $\RR^2$, the global $C^\infty$ hypothesis of \Cref{thm:singular-stokes}
is satisfied. The theorem then equates the parameter-cube integral
$\int_{[0,1]^2}\sigma^*(\extd\omega)$ with the alternating sum of pullback
integrals over the four edges of the unit square: by direct computation, the
parameter-cube integrand reduces to the Jacobian
$\det D\sigma(r,\theta) = 2\pi(1+r)$, consistent with the classical
Jacobian formula for area.
Concretely, $\int_0^1\!\int_0^1 2\pi(1+r)\,dr\,d\theta = 3\pi$, and
the alternating face-integral sum yields the same value (the two
$\theta{=}0$ and $\theta{=}1$ edge integrals cancel by periodicity)---providing
a numerical sanity check of the theorem.
The development includes
\texttt{LeanStokes/CubeStokes/Demo.lean}, which exercises box Stokes (the
specialization of \Cref{thm:singular-stokes} to $\sigma = \mathrm{id}$) in
4D, 5D, and 10D; the full singular theorem with non-trivial~$\sigma$ is
exercised at the type-checking level via the \lean{singularStokes} declaration.

\subsection{Chain-Level Extension}

Stokes extends by $\ZZ$-linearity to formal chains of singular cubes:

\begin{definition}[Singular chain]
  A \emph{smooth singular $n$-chain} in $\RR^m$ is a formal $\ZZ$-linear combination
  of smooth singular $n$-cubes, implemented as a finitely-supported function
  (\lean{Finsupp}) from singular cubes to $\ZZ$:
  \[
    C_n(\RR^m) := \left\{ \sum_{\text{finite}} c_\sigma \cdot \sigma \;\middle|\;
      c_\sigma \in \ZZ,\; \sigma \text{ a smooth singular $n$-cube} \right\}
  \]
\end{definition}

\begin{theorem}[Stokes for singular boundary, \lean{stokes_singular_boundary}]\label{thm:stokes-chain-singular}
\begin{sloppypar}
  For a smooth singular cube $\sigma$ and smooth form $\omega$:
  \[
    \int_{\bdry\sigma} \omega = \int_\sigma \extd\omega
  \]
  where $\int_{\bdry\sigma}\omega = \sum_i (-1)^i (\int_{\mathrm{face}_i(1)}\omega
  - \int_{\mathrm{face}_i(0)}\omega)$.
  This is a direct restatement of \lean{singularStokes} (swapped sides)
  using \lean{bdryIntegral_singular}. The corresponding formal-chain version,
  \lean{stokes_singular_chain}, equates
  \lean{integrateChain (singularBoundarySingle sigma) omega}
  with the interior integral.
\end{sloppypar}
\end{theorem}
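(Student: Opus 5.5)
The plan is to derive \Cref{thm:stokes-chain-singular} directly from \Cref{thm:singular-stokes}, with no new analysis. First I will fix the meaning of the boundary integral. Define $\int_{\bdry\sigma}\omega$ as the finite sum
\[
  \sum_{i=0}^{n} (-1)^i \Bigl( \int_{\mathrm{face}_i(1)}\omega - \int_{\mathrm{face}_i(0)}\omega \Bigr),
\]
where each face is $\sigma\circ\iota_{i,\varepsilon}$, packaged as a smooth singular $n$-cube via \lean{singularFace}. A lemma \lean{bdryIntegral_singular} will unfold this definition and rewrite it into exactly the sum appearing on the right-hand side of \lean{singularStokes}, term by term and with the same ordering over \lean{Fin (n+1)}. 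Once that holds, the single-cube statement is simply \lean{singularStokes} read with its two sides exchanged (\lean{Eq.symm}). The only hypothesis to supply is $\omega \in C^\infty$; the smoothness of $\sigma$ is already carried by the structure.

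For the formal-chain version I will define \lean{singularBoundarySingle sigma} as the \lean{Finsupp} sum over $i$ and $\varepsilon\in\{0,1\}$ of the singletons with coefficient $(-1)^{i}$ for $\varepsilon=1$ and $-(-1)^{i}$ for $\varepsilon=0$. I will define \lean{integrateChain c omega} as $\sum_\tau c_\tau \int_\tau\omega$, that is, \lean{Finsupp.sum} with the integer coefficients cast into $\RR$. The key step is that \lean{integrateChain} is additive in the chain; this follows from \lean{Finsupp.sum_add_index'}, since $\tau \mapsto c\int_\tau\omega$ is additive in $c$ and vanishes at $c=0$. \lean{integrateChain} is also compatible with \lean{Finsupp.single}. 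Pushing \lean{integrateChain} through the finite sum defining \lean{singularBoundarySingle} then reproduces the face sum above, and the previous paragraph finishes the argument.

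The main obstacle I expect is bookkeeping in the chain version rather than anything conceptual. Two distinct faces (for example $\mathrm{face}_i(1)$ and $\mathrm{face}_j(0)$) may coincide as elements of \lean{SmoothSingularCube n m}, so the \lean{Finsupp} can merge their coefficients. The argument must therefore never compare supports or individual coefficients. It should only use additivity of \lean{integrateChain}, so that the merged coefficient contributes the same total, and it should handle the casts $\ZZ\to\RR$ of $(-1)^i$ using \lean{push_cast} and \lean{Int.cast_pow}.
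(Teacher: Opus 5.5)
Your proposal is correct and takes the same route as the paper. The single-cube statement is \lean{singularStokes} with its sides exchanged, once \lean{bdryIntegral_singular} unfolds the boundary integral into the face sum; the chain version \lean{stokes_singular_chain} then follows by pushing \lean{integrateChain} linearly through the finite sum defining \lean{singularBoundarySingle}, using additivity and compatibility with \lean{Finsupp.single} (the paper's \lean{integrateChain_add} and \lean{integrateChain_smul}).
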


\begin{remark}
We have a boundary operator on individual cubes (\lean{singularBoundarySingle})
and chain-level Stokes (\lean{stokes_chain}) extending by linearity to
formal chains. The chain complex identity $\partial^2 = 0$ is proved as
\lean{bdry_bdry_chain_zero} (\Cref{sec:bdry-sq-singular}).
\end{remark}

\begin{sloppypar}
Chain integration satisfies additivity
(\lean{integrateChain_add}) and scalar compatibility
(\lean{integrateChain_smul}):
\end{sloppypar}
\[
  \int_{c_1 + c_2} \omega = \int_{c_1}\omega + \int_{c_2}\omega
  \qquad\text{and}\qquad
  \int_{k \cdot c}\omega = k \cdot \int_c \omega
\]
where $k \in \ZZ$ acts via the canonical embedding $\ZZ \hookrightarrow \RR$.

\section{Box Stokes: Mathematical Framework}
\label{sec:math}

The singular Stokes theorem reduces to the box Stokes theorem, which we now describe.
This section presents the coordinate-level infrastructure that serves as the
foundation for the entire development.

\subsection{Coordinate \texorpdfstring{$n$}{n}-Forms on \texorpdfstring{$\RR^{n+1}$}{R\^{}(n+1)}}

An $n$-form on $\RR^{n+1}$ can be written in coordinates as:
\[
  \omega = \sum_{i=0}^{n} \omega_i(x)\, \dif x_0 \wedge \cdots \wedge \widehat{\dif x_i}
    \wedge \cdots \wedge \dif x_n
\]
where $\widehat{\dif x_i}$ denotes omission. We represent this as a family of
coefficient functions:

\begin{definition}[Coordinate $n$-form]
  A \emph{coordinate $n$-form} on $\RR^{n+1}$ is a function
  $\omega : \mathrm{Fin}(n{+}1) \to (\RR^{n+1} \to \RR)$
  where $\omega_i(x)$ is the coefficient of the $i$-th basis $n$-form at~$x$.
\end{definition}

In Lean~4:
\begin{lstlisting}
def CoordNForm (n : Nat) := Fin (n + 1)  ->  (Fin (n + 1)  ->  Real)  ->  Real
\end{lstlisting}

\subsection{Exterior Derivative}

The exterior derivative of $\omega$ is the $(n{+}1)$-form (i.e., a top form on $\RR^{n+1}$):
\[
  \extd\omega = \left(\sum_{i=0}^{n} (-1)^i \frac{\bdry \omega_i}{\bdry x_i}\right)
    \dif x_0 \wedge \cdots \wedge \dif x_n
\]

The key insight for our proof is the \emph{signed coefficient} construction:
\begin{definition}[Signed coefficient]
  $f_i(x) := (-1)^i \cdot \omega_i(x)$
\end{definition}

\noindent The sign arises because $\extd(\omega_i\,\dif x_0\wedge\cdots
\wedge\widehat{\dif x_i}\wedge\cdots\wedge\dif x_n)$ requires $i$ adjacent
transpositions to place $\dif x_i$ in position~$i$, yielding the factor $(-1)^i$.
With this convention, the exterior derivative coefficient becomes the
\emph{divergence} of the vector field $(f_0, \ldots, f_n)$:
\[
  \extd\omega = \left(\sum_{i=0}^{n} \frac{\bdry f_i}{\bdry x_i}\right)
    \dif x_0 \wedge \cdots \wedge \dif x_n
\]

This is precisely what \mathlib{}'s divergence theorem computes.

\subsection{Boundary Integral}

For a box $\Icc{a}{b} = \prod_{i=0}^{n} [a_i, b_i] \subset \RR^{n+1}$,
the oriented boundary integral of $\omega$ is:
\[
  \int_{\bdry\Icc{a}{b}} \omega = \sum_{i=0}^{n} \left(
    \int_{\text{face}_i} f_i(b_i, \hat{x}_i)\, \dif\hat{x}_i
    - \int_{\text{face}_i} f_i(a_i, \hat{x}_i)\, \dif\hat{x}_i
  \right)
\]
where $\text{face}_i = \prod_{j \neq i} [a_j, b_j]$ is the $n$-dimensional
face box, $\hat{x}_i = (x_0, \ldots, x_{i-1}, x_{i+1}, \ldots, x_n)$ denotes
all coordinates except~$i$, and $f_i(c, \hat{x}_i)$ denotes inserting $c$ at position~$i$.

This exactly matches the right-hand side of \mathlib{}'s divergence theorem.

\section{Box Stokes: Main Theorem}
\label{sec:main-theorem}

\begin{theorem}[Cubical Stokes on boxes, full-hypothesis version]\label{thm:stokes-box}
  Let $\omega$ be a coordinate $n$-form on $\RR^{n+1}$, $a \leq b$ component-wise,
  and $s$ a countable subset. If the signed coefficients $f_i = (-1)^i \omega_i$ are:
  \begin{enumerate}
    \item continuous on $\Icc{a}{b}$,
    \item differentiable on the interior minus~$s$, and
    \item the exterior derivative is integrable on~$\Icc{a}{b}$,
  \end{enumerate}
  then $\displaystyle\int_{\Icc{a}{b}} \extd\omega = \int_{\bdry\Icc{a}{b}} \omega$.
\end{theorem}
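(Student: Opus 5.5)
The plan is to reduce the statement directly to \mathlib{}'s box divergence theorem \lean{integral_divergence_of_hasFDerivAt_off_countable'}, applied to the vector field $F = (f_0,\ldots,f_n)$ of signed coefficients $f_i = (-1)^i\omega_i$. That theorem takes exactly the hypotheses listed: continuity of each $f_i$ on $\Icc{a}{b}$, existence of a Fr\'echet derivative $f_i'(x)$ at every point of the interior outside the countable set $s$, and integrability of the divergence $\sum_i f_i'(x)(e_i)$ on the box. Its conclusion is
\[
\int_{\Icc{a}{b}} \sum_i f_i'(x)(e_i)\,\dif x = \sum_i \Bigl(\int_{\text{face}_i} f_i(b_i,\hat x_i)\,\dif\hat x_i - \int_{\text{face}_i} f_i(a_i,\hat x_i)\,\dif\hat x_i\Bigr).
\]
Here the faces are parametrized by $\mathrm{Fin}\,n \to \RR$, inserting the constant via \lean{Fin.insertNth}.

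The proof has three steps. First, I will take the derivative witness $f_i'(x) := \mathrm{fderiv}\,\RR\,f_i\,x$. The differentiability hypothesis then gives the required \lean{HasFDerivAt} on the interior minus $s$ via \lean{DifferentiableAt.hasFDerivAt}.

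Second, I will identify the left-hand sides. The coefficient of $\extd\omega$ in our definition is $\sum_i(-1)^i\,\bdry\omega_i/\bdry x_i$. Linearity of \lean{fderiv} under the constant scalar $(-1)^i$ gives $\mathrm{fderiv}\,f_i\,x\,(e_i) = (-1)^i\,\bdry_i\omega_i(x)$, so the box integral of $\extd\omega$ is pointwise equal to the divergence integrand. If the paper's \lean{extDerivCoord} is defined directly through the signed coefficients, this step is definitional. I will arrange the definitions so that it is, rewriting with \lean{integral_congr} only if necessary. The integrability hypothesis (3) is stated for exactly this integrand, so it transfers unchanged.

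Third, I will identify the right-hand sides. The boundary integral $\int_{\bdry\Icc{a}{b}}\omega$ was defined as the sum over $i$ of face integrals of $f_i$ with $b_i$, respectively $a_i$, inserted at position $i$. The face domain is the box $\prod_{j\ne i}[a_j,b_j]$, expressed as \lean{Icc (a comp succAbove i) (b comp succAbove i)}. This must match \mathlib{}'s formulation syntactically, and I expect it to close by \lean{rfl} or a short \lean{simp}.

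The main obstacle is this bookkeeping: making the coordinate encodings of faces, the insertion map, and the sign convention agree with \mathlib{}'s exact form. I would address it at the level of definitions, defining \lean{bdryIntegral} and \lean{extDerivCoord} so that they unfold to the divergence-theorem expressions, which keeps the proof body a short wrapper.
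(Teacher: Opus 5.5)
Your proposal is correct and is essentially the paper's proof: after unfolding \lean{boxIntegral}, \lean{bdryIntegral}, and \lean{extDerivCoord}, the paper applies \lean{integral_divergence_of_hasFDerivAt_off_countable'} to the signed coefficients \lean{signedCoeff omega}. As you anticipated, all the content lies in the definitions. The only difference is that the paper takes $(-1)^i\,\mathrm{fderiv}\,\omega_i\,x$ itself as the derivative witness, and states the differentiability and integrability hypotheses in exactly that form. The proof is therefore a single \texttt{exact}, and your extra identification $\mathrm{fderiv}\,f_i\,x = (-1)^i\,\mathrm{fderiv}\,\omega_i\,x$ is not needed, though it is valid since $(-1)^i \neq 0$.
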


The Lean statement:
\begin{lstlisting}
theorem stokes_on_box (a b : Fin (n + 1)  ->  Real) (hle : a  <=  b)
    (omega : CoordNForm n)
    (s : Set (Fin (n + 1)  ->  Real)) (hs : s.Countable)
    (hc : forall  i, ContinuousOn (signedCoeff omega i) (Icc a b))
    (hd : forall  x  in  (pi univ fun i => Ioo (a i) (b i)) \ s,
      forall  i, HasFDerivAt (signedCoeff omega i)
        ((-1 : Real) ^ (i : Nat)  smul  fderiv Real (omega i) x) x)
    (hi : IntegrableOn (fun x => Finset.sum i : Fin (n + 1),
      ((-1 : Real) ^ (i : Nat)  smul  fderiv Real (omega i) x) (Pi.single i 1))
      (Icc a b)) :
    boxIntegral (extDerivCoord omega) a b = bdryIntegral omega a b
\end{lstlisting}

\begin{proof}[Proof (3 lines)]
The proof unfolds definitions and applies the divergence theorem directly:
\begin{lstlisting}
  unfold boxIntegral bdryIntegral extDerivCoord
  exact integral_divergence_of_hasFDerivAt_off_countable'
    a b hle (signedCoeff omega)
    (fun i x => (-1 : Real) ^ (i : Nat)  smul  fderiv Real (omega i) x)
    s hs hc hd hi
\end{lstlisting}
The key mathematical content is in the \emph{definitions}: choosing
\lean{signedCoeff} so that the divergence theorem's hypotheses align exactly
with our boundary integral.
\end{proof}

\begin{theorem}[Smooth version]\label{thm:stokes-smooth}
  If all coefficient functions $\omega_i$ are $C^\infty$ (i.e., \lean{ContDiff Real Top (omega i)}
  for all~$i$, abbreviated \lean{IsSmooth omega} in the formalization),
  then Stokes holds with no countable exceptional set:
\begin{lstlisting}
theorem stokes_smooth (a b : Fin (n + 1)  ->  Real) (hle : a  <=  b)
    (omega : CoordNForm n) (homega : IsSmooth omega) :
    boxIntegral (extDerivCoord omega) a b = bdryIntegral omega a b
\end{lstlisting}
  The proof derives the hypotheses of \Cref{thm:stokes-box} (integrability,
  differentiability off a countable set) from \lean{ContDiff} using standard
  \mathlib{} lemmas, then applies the box theorem.
\end{theorem}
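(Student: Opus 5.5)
The plan is to derive each hypothesis of \Cref{thm:stokes-box} from \lean{IsSmooth omega} and then invoke \lean{stokes_on_box} with the empty exceptional set $s = \emptyset$. The fixed ingredients are: the box $[a,b]$ with $a \le b$; the signed coefficients $f_i = (-1)^i \omega_i$; and the candidate derivative $x \mapsto (-1)^i\,D\omega_i(x)$, i.e.\ $(-1)^i \cdot \mathrm{fderiv}\;\RR\;(\omega_i)\;x$.

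\emph{Countability and continuity.} Countability of $\emptyset$ is immediate. For continuity, \lean{ContDiff Real Top (omega i)} gives global continuity of $\omega_i$. Multiplying by the constant $(-1)^i$ preserves continuity, so $f_i$ is continuous everywhere and in particular on $\Icc{a}{b}$.

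\emph{Differentiability.} $C^\infty$ implies differentiability at every point, so $\omega_i$ has Fréchet derivative $D\omega_i(x)$ at each $x$. Differentiating the scalar multiple $(-1)^i\,\omega_i$ gives exactly $(-1)^i \cdot D\omega_i(x)$, which is the form \lean{hd} demands. This holds at every $x$, and hence on the open box minus $\emptyset$. The only care needed is that the scalar-multiplication convention in \lean{signedCoeff} matches the derivative expression syntactically; I would unfold \lean{signedCoeff} and use the constant-multiple rule for \lean{HasFDerivAt}.

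\emph{Integrability: the main obstacle.} I expect this step to be the main, though mild, obstacle. The integrand is $x \mapsto \sum_i \bigl((-1)^i D\omega_i(x)\bigr)(e_i)$. I would show it is continuous on all of $\RR^{n+1}$. Since $\omega_i$ is $C^\infty$, in particular $C^1$, the map $x \mapsto D\omega_i(x)$ is continuous into the space of continuous linear maps. Evaluation at the fixed vector $e_i$ is a continuous linear operation, so $x \mapsto D\omega_i(x)(e_i)$ is continuous. Scaling by a constant and taking the finite sum over $\mathrm{Fin}(n{+}1)$ preserve continuity.

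To finish, a continuous function on the compact set $\Icc{a}{b}$ is integrable with respect to Lebesgue measure, which has finite measure on compact sets. This is the standard \mathlib{} fact that a function continuous on a compact set is \lean{IntegrableOn} there.

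The fiddly part is getting continuity of the \lean{fderiv} map out of \lean{ContDiff Real Top} (via $1 \le \top$) and threading it through the evaluation and the sum. Once all four hypotheses are in hand, \lean{stokes_on_box} closes the goal directly.
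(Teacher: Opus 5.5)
Your proposal is correct and follows the same route as the paper. You take the exceptional set $s=\emptyset$, get continuity and pointwise \lean{HasFDerivAt} of the signed coefficients from \lean{ContDiff} via the constant-multiple rule, and get integrability from continuity of $x \mapsto \mathrm{fderiv}\,\RR\,\omega_i\,x$ on the compact box $\Icc{a}{b}$, before applying \lean{stokes_on_box}, just as the paper's "derive the hypotheses from \lean{ContDiff}, then apply the box theorem" proof does.
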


\section{Bridge: From \mathlib{} Exterior Derivative to Box Stokes}
\label{sec:bridge}

The bridge theorem validates our coordinate formula against mathlib's
independently-defined abstract exterior derivative. It proves that evaluating
\mathlib{}'s \lean{extDeriv}---defined via Fr\'echet derivatives of alternating
maps---on standard basis vectors recovers our coordinate
\lean{extDerivCoord}, confirming that the sign convention is mathematically
correct. This does not establish a full theory of form integration or pullbacks,
but it provides the independent consistency guarantee a referee would expect.

\begin{definition}[Coefficient extraction]
  Given a \mathlib{} differential form $\omega : \RR^{n+1} \to \Lambda^n(\RR^{n+1}, \RR)$,
  its $i$-th coordinate coefficient is:
  \[
    (\mathrm{toCoordNForm}\;\omega)_i(x) = \omega(x)(e_{s(i,0)}, \ldots, e_{s(i,n-1)})
  \]
  where $s(i,k) = \mathrm{succAbove}(i, k)$ enumerates $\{0,\ldots,n\}\setminus\{i\}$.
\end{definition}

\begin{theorem}[Bridge theorem]\label{thm:bridge}
  For a form $\omega$ differentiable at $x$:
  \[
    (\extd\omega)(x)(e_0, \ldots, e_n)
      = \mathrm{extDerivCoord}(\mathrm{toCoordNForm}\;\omega)(x)
  \]
\end{theorem}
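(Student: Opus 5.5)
The plan is to unfold \lean{extDeriv} at $x$ through \mathlib{}'s characterization of the exterior derivative as the alternatization of the Fr\'echet derivative. Concretely, for $\omega$ differentiable at $x$, \mathlib{} gives
\[
  (\extd\omega)(x)(v_0,\ldots,v_n) = \sum_{i=0}^{n} (-1)^i\, \bigl(D\omega(x)\,v_i\bigr)(v_0,\ldots,\widehat{v_i},\ldots,v_n),
\]
where the $n$ remaining vectors are listed via $\mathrm{succAbove}(i,\cdot)$, i.e.\ via \lean{Fin.removeNth}. I would first establish this expansion as a rewrite lemma, using \lean{extDeriv_apply} together with the \lean{ContinuousAlternatingMap.alternatizeUncurryFin} apply lemma. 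Substituting $v_j = e_j$ gives terms $(D\omega(x)\,e_i)(e_{\mathrm{succAbove}(i,0)},\ldots,e_{\mathrm{succAbove}(i,n-1)})$. This uses \lean{Fin.removeNth} applied to \lean{fun j => Pi.single j 1}, and that equals \lean{fun k => Pi.single (i.succAbove k) 1} definitionally or by \lean{funext}.

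The second step identifies each term with $\bdry_i$ of the coefficient $(\mathrm{toCoordNForm}\,\omega)_i$ at $x$. Evaluation at a fixed tuple of vectors, $A \mapsto A(u_0,\ldots,u_{n-1})$, is a continuous linear map on \lean{ContinuousAlternatingMap}. Hence the chain rule (\lean{ContinuousLinearMap.fderiv_comp} or \lean{HasFDerivAt.clm_apply}-style lemmas) shows that the scalar function $x\mapsto \omega(x)(e_{s(i,0)},\ldots,e_{s(i,n-1)})$ is differentiable at $x$. Its derivative applied to $e_i$ is $(D\omega(x)\,e_i)(e_{s(i,0)},\ldots)$. Thus the $i$-th summand equals $(-1)^i\,\mathrm{fderiv}\,(\mathrm{toCoordNForm}\,\omega)_i\,x\,(e_i)$.

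The final step is to unfold \lean{extDerivCoord}. From the box theorem statement, it is $\sum_i (-1)^i\,\mathrm{fderiv}\,(\omega_i)\,x\,(\mathrm{Pi.single}\ i\ 1)$, i.e.\ the divergence of the signed coefficients. The two sums then agree term by term via \lean{Finset.sum_congr}. Some care may be needed if \lean{extDerivCoord} is phrased through \lean{signedCoeff}; in that case I would pull the scalar $(-1)^i$ out with \lean{fderiv_const_smul}, which needs the differentiability from step two.

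I expect the main obstacle to be the first step. The difficulty is not mathematical but one of matching \mathlib{}'s internal conventions. The sign and ordering in \lean{alternatizeUncurryFin} must be checked to be exactly $(-1)^i$ with removal via \lean{Fin.removeNth}, not some other insertion convention such as \lean{Fin.insertNth} or a shift. I would first try to state the expanded formula and close it with \lean{simp [extDeriv_apply, alternatizeUncurryFin_apply]}. If the index conventions disagree, I would prove an auxiliary lemma identifying \lean{Fin.removeNth i (fun j => Pi.single j 1)} with the \lean{succAbove} enumeration used in \lean{toCoordNForm}.
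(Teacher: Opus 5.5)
Your proof is correct, but it takes a more hands-on route than the paper. The paper does not unfold \lean{extDeriv} through \lean{alternatizeUncurryFin}. It applies \mathlib{}'s \lean{extDeriv_apply_vectorField_of_pairwise_commute} to the constant frame $V_j(x) = e_j$. Constant fields have zero derivative, so the pairwise-commutation hypothesis holds trivially and the Lie-bracket terms drop out. The lemma then returns directly $\sum_j (-1)^j$ times the directional derivative along $e_j$ of $x \mapsto \omega(x)(e_{s(j,0)},\ldots,e_{s(j,n-1)})$, which is \lean{extDerivCoord} applied to \lean{toCoordNForm}.

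Your steps 1 and 2 are exactly what that library lemma packages, specialized to constant fields where no product-rule terms on the $V_j$ arise. Those steps are the alternatization formula together with the chain rule through the continuous linear evaluation map $A \mapsto A(u_0,\ldots,u_{n-1})$.

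The paper's route gains brevity and leaves the evaluation chain-rule bookkeeping to \mathlib{}, at the price of depending on the vector-field machinery. Your route depends only on the definition of \lean{extDeriv} and an elementary chain rule, and it makes clear why no bracket terms appear. Both routes need only differentiability of $\omega$ at $x$. On your route, though, you must do the convention matching yourself: the $\ZZ$- versus $\RR$-valued sign $(-1)^i$, and \lean{Fin.removeNth} versus the \lean{succAbove} enumeration. You correctly identify this as the main practical obstacle.
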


The proof uses \lean{extDeriv_apply_vectorField_of_pairwise_commute} with constant
vector fields $V_j(x) = e_j$, which have zero Lie bracket.
This \mathlib{} lemma expresses $(\extd\omega)(V_0, \ldots, V_n)$ as an alternating sum
of directional derivatives $V_j[\omega(V_0, \ldots, \widehat{V_j}, \ldots, V_n)]$;
the Lie bracket terms vanish because the $V_j$ are constant:
\begin{lstlisting}
theorem extDeriv_topCoeff_eq_extDerivCoord
    (omega : (Fin (n + 1)  ->  Real)  -> 
         (Fin (n + 1)  ->  Real) [Alt^Fin n] -> L[Real] Real)
    (x : Fin (n + 1)  ->  Real)
    (homega : DifferentiableAt Real omega x) :
    extDeriv omega x (fun j => Pi.single j 1)
      = extDerivCoord (toCoordNForm omega) x
\end{lstlisting}

\begin{theorem}[Unified Stokes on boxes via \lean{extDeriv}]\label{thm:unified}
  Combining the bridge with cubical Stokes:
  \[
    \int_{\Icc{a}{b}} (\extd\omega)(e_0,\ldots,e_n) = \int_{\bdry\Icc{a}{b}} \omega
  \]
  for any $C^\infty$ differential form $\omega$. The cleanest Lean statement is:
\begin{lstlisting}
theorem stokes_extDeriv_smooth
    (omega : (Fin (n + 1)  ->  Real)  -> 
         (Fin (n + 1)  ->  Real) [Alt^Fin n] -> L[Real] Real)
    (a b : Fin (n + 1)  ->  Real) (hle : a  <=  b)
    (homega : ContDiff Real Top omega) :
    integral (x in Icc a b) (extDeriv omega x (fun j => Pi.single j 1))
      = bdryIntegral (toCoordNForm omega) a b
\end{lstlisting}
  This requires only a single hypothesis: global $C^\infty$ smoothness of $\omega$.
  Differentiability and coefficient smoothness are derived automatically.
\end{theorem}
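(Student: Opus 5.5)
The plan is to compose the bridge theorem (\Cref{thm:bridge}) with the smooth box theorem (\Cref{thm:stokes-smooth}) applied to the coordinate form $\mathrm{toCoordNForm}\;\omega$. We first rewrite the integrand pointwise. For every $x \in \Icc{a}{b}$ (indeed for every $x$), the bridge theorem replaces $\extd\omega(x)(e_0,\ldots,e_n)$ by $\mathrm{extDerivCoord}(\mathrm{toCoordNForm}\;\omega)(x)$. Its only hypothesis, differentiability of $\omega$ at $x$, follows from \lean{ContDiff Real Top omega} via \lean{ContDiff.differentiable} (the smoothness order $\top \ne 0$). Using \lean{setIntegral_congr} (or simply \lean{funext} before integrating), the left-hand side becomes \lean{boxIntegral (extDerivCoord (toCoordNForm omega)) a b}. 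For this step we would first check that \lean{boxIntegral} unfolds to the set integral over \lean{Icc a b}.

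Next we invoke \lean{stokes_smooth} with $a$, $b$, \lean{hle} and the coordinate form. This produces exactly \lean{bdryIntegral (toCoordNForm omega) a b} and closes the goal, provided we supply \lean{IsSmooth (toCoordNForm omega)}. That is, for each $i$ the map $x \mapsto \omega(x)(e_{s(i,0)},\ldots,e_{s(i,n-1)})$ must be $C^\infty$.

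This smoothness is the only nontrivial obligation, and we expect it to be the main obstacle, since it is mostly a matter of finding the right \mathlib{} lemma. The cleanest route writes the coefficient as the composite of $\omega$ with the evaluation map
\[
  \mathrm{ev}_v : \Lambda^n(\RR^{n+1},\RR) \to \RR,\qquad \phi \mapsto \phi(v),
\]
at the fixed tuple $v = (e_{s(i,k)})_k$. This map is a continuous linear map: it is \lean{ContinuousAlternatingMap.apply} (or the corresponding \lean{ContinuousLinearMap} built from evaluation). Continuous linear maps are \lean{ContDiff}, so \lean{ContDiff.comp} gives the result. If the bundled evaluation map is awkward to locate, the fallback is the lemma that applying a smooth family of continuous multilinear maps to constant arguments is smooth, namely \lean{ContDiff.continuousAlternatingMap_apply} or \lean{contDiff_apply} after coercing to \lean{ContinuousMultilinearMap}. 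This argument is essentially the one behind \lean{toCoordNForm_pullback_isSmooth} mentioned in Step~2 of \Cref{sec:singular-stokes}, specialized to $\sigma = \mathrm{id}$. It could even be obtained by that specialization, using \lean{fderiv_id} to simplify the pullback back to $\omega$, though the direct composition argument is shorter.
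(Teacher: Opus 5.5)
Your proposal is correct and takes the same route as the paper. You rewrite the integrand pointwise with the bridge theorem \lean{extDeriv_topCoeff_eq_extDerivCoord}, getting differentiability from the $C^\infty$ hypothesis, and then apply \lean{stokes_smooth} to \lean{toCoordNForm omega}, obtaining coefficient smoothness from \lean{ContDiff} of $\omega$ by composing with the continuous linear evaluation at the fixed basis tuple; this is exactly what the paper means by differentiability and coefficient smoothness being ``derived automatically'' from the single hypothesis.
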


\section{Dimensional Specializations}
\label{sec:classical}

Each result below is a box/interval specialization of \lean{stokes_smooth}
(a direct dimensional instantiation, not an independent derivation):

\begin{corollary}[Fundamental Theorem of Calculus, \lean{ftc_stokes}]\label{cor:ftc}
  For smooth $f : \RR \to \RR$ and $a \leq b$:
  \[
    \int_{[a,b]} f'(x)\,\dif x = f(b) - f(a)
  \]
  Proved by defining $\omega_0(y) = f(y_0)$ for $y \in \RR^1$ and applying
  \lean{stokes_smooth} at $n=0$. The boundary integral reduces via
  \lean{bdryIntegral_fin1} (which uses \lean{Fin.insertNth_apply_same} to
  evaluate face values) and the volume integral via \lean{extDerivCoord_of_comp_proj}
  (which computes \lean{fderiv} of $f \circ \mathrm{proj}_0$).
\end{corollary}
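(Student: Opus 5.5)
The plan is to instantiate \Cref{thm:stokes-smooth} at $n=0$ with the box $\Icc{a}{b}$ read as a one-dimensional box in $\RR^1 = (\mathrm{Fin}\,1 \to \RR)$, and then compute both sides explicitly.

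\textbf{Setting up the form.} Define the coordinate $0$-form $\omega : \mathrm{Fin}(1) \to (\RR^1 \to \RR)$ by $\omega_0(y) = f(y_0)$. Its smoothness, i.e.\ \lean{IsSmooth omega}, follows from \lean{ContDiff} of $f$ composed with the continuous linear projection $y \mapsto y_0$, since projections are $C^\infty$. Applying \lean{stokes_smooth} with the endpoints $(\lambda\_.\,a)$ and $(\lambda\_.\,b)$ gives
\[
\mathrm{boxIntegral}(\mathrm{extDerivCoord}\,\omega) = \mathrm{bdryIntegral}\,\omega,
\]
where the hypothesis $a\le b$ lifts componentwise trivially.

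\textbf{Volume side.} Here $\mathrm{extDerivCoord}\,\omega\,(y) = (-1)^0\, D\omega_0(y)\, e_0$. By the chain rule, $D(f\circ \mathrm{proj}_0)(y) = f'(y_0)\cdot \mathrm{proj}_0$. Evaluating on $e_0=\mathrm{Pi.single}\,0\,1$ yields $f'(y_0)$; this is the content of \lean{extDerivCoord_of_comp_proj}. The box integral over $\Icc{a}{b}\subset\RR^1$ must then be transported to $\int_{[a,b]} f'(x)\,\dif x$ on $\RR$. I would use the measure-preserving equivalence $\RR^1 \simeq \RR$ (\lean{MeasurableEquiv.funUnique} / \lean{volume_preserving_funUnique}), which carries $\Icc{a}{b}$ in $\RR^1$ onto $[a,b]$.

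\textbf{Boundary side.} The face box is $\prod_{j\in\emptyset}$, i.e.\ the unique point of $\mathrm{Fin}\,0\to\RR$, which carries a Dirac/unit volume measure. The integral over it is therefore just evaluation. Inserting $c$ at position $0$ via \lean{Fin.insertNth_apply_same} gives $f(c)$, so the boundary integral becomes $(-1)^0(f(b)-f(a))$; this is \lean{bdryIntegral_fin1}.

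\textbf{Main obstacle.} I expect the hardest part to be this measure-theoretic bookkeeping rather than the calculus. That covers the transport between $\RR^1$ and $\RR$, and the fact that the volume on $\mathrm{Fin}\,0\to\RR$ gives the single point mass $1$. I would try rewriting with \lean{MeasureTheory.integral_fintype}-style lemmas or \lean{volume_pi} on the empty index type first, and then use \lean{simp} to discharge the sign $(-1)^0$.
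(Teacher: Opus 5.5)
Your proposal is correct and follows the paper's route. You instantiate \lean{stokes_smooth} at $n=0$ with $\omega_0(y)=f(y_0)$, evaluate the volume side by the chain rule for $f\circ\mathrm{proj}_0$ (\lean{extDerivCoord_of_comp_proj}), and evaluate the boundary side on the one-point face $\mathrm{Fin}\,0\to\RR$ using \lean{Fin.insertNth_apply_same} (\lean{bdryIntegral_fin1}). The one addition is your transport along $\RR^1\simeq\RR$; the paper's \lean{ftc_stokes} apparently does not need it, since \Cref{sec:validation} describes the Stokes path as set integrals over $\mathrm{Fin}\,1\to\RR$, but the step is harmless and correct.
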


\begin{corollary}[Rectangular Green's theorem, \lean{green_stokes}]\label{cor:green}
  For smooth $Q, P : \RR^2 \to \RR$ and a rectangle $\Icc{a}{b}$, writing
  $[Q, P]$ for the coordinate 1-form with coefficients $(\omega_0, \omega_1) = (Q, P)$:
  \[
    \int_{\Icc{a}{b}} \left(\frac{\bdry Q}{\bdry x} - \frac{\bdry P}{\bdry y}\right)
      \dif x\,\dif y = \int_{\bdry\Icc{a}{b}} [Q, P]
  \]
  The coordinate convention is: a 1-form on $\RR^2$ in coordinates
  $(x_0, x_1) = (x, y)$ is $\omega = \omega_0\,\dif x_1 + \omega_1\,\dif x_0$
  where $\omega_i$ is the coefficient of the 1-form that \emph{omits} $\dif x_i$.
  Thus $\omega_0 = Q$ (paired with $\dif y$ since $\dif x_0 = \dif x$ is omitted)
  and $\omega_1 = P$ (paired with $\dif x$ since $\dif x_1 = \dif y$ is omitted),
  recovering the classical $\omega = P\,\dif x + Q\,\dif y$.
\end{corollary}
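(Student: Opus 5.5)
The plan is to obtain Green's theorem as the $n=1$ instance of \lean{stokes_smooth} (\Cref{thm:stokes-smooth}), applied to the coordinate 1-form $\omega = [Q,P]$, i.e.\ $\omega_0 = Q$, $\omega_1 = P$. The hypothesis \lean{IsSmooth omega} is immediate: both coefficients are $C^\infty$ by assumption, and we discharge the quantifier over $i \in \mathrm{Fin}(2)$ by case analysis on $i$. The theorem then yields
\[
\int_{\Icc{a}{b}} \mathrm{extDerivCoord}([Q,P])(x)\,\dif x = \int_{\bdry\Icc{a}{b}} [Q,P],
\]
and it only remains to identify the left integrand with $\bdry Q/\bdry x - \bdry P/\bdry y$.

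To identify the integrand, we unfold \lean{extDerivCoord}. It equals
\[
\sum_{i=0}^{1} (-1)^i\, \bigl(D\omega_i(x)\bigr)(e_i).
\]
Expanding the sum over $\mathrm{Fin}(2)$ (e.g.\ via \lean{Fin.sum_univ_two}) gives
\[
DQ(x)(e_0) - DP(x)(e_1),
\]
since $(-1)^0 = 1$ and $(-1)^1 = -1$. These two terms are, by definition, the partial derivatives $\bdry Q/\bdry x_0$ and $\bdry P/\bdry x_1$. Hence the integrands agree pointwise, and we rewrite under the set integral by function extensionality (\lean{integral_congr} or simply \lean{congr}).

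\textbf{Expected difficulty.} The only genuine obstacle is bookkeeping rather than mathematics. We need to match the paper's chosen meaning of $\bdry Q/\bdry x$ (presumably stated as \lean{fderiv Real Q x (Pi.single 0 1)}) with the form produced by \lean{extDerivCoord}, and to normalize the sign $(-1)^{(i:\mathbb{N})}$ for the concrete $i$ values, using \lean{simp} with \lean{pow_zero}, \lean{pow_one}, \lean{neg_one_smul}, and \lean{sub_eq_add_neg}. If the paper additionally wants the four-edge decomposition of the boundary, that is a separate unfolding of \lean{bdryIntegral} at $n=1$, analogous to \lean{bdryIntegral_fin1}. We do not need it for the displayed identity, where the right side is left as $\int_{\bdry\Icc{a}{b}}[Q,P]$.
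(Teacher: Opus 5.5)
Your proposal is correct and takes the paper's route: Green's theorem is the $n=1$ instantiation of \lean{stokes_smooth} with $\omega=[Q,P]$, where \lean{IsSmooth} is discharged by cases on $\mathrm{Fin}(2)$ and the integrand $\sum_{i}(-1)^i\,D\omega_i(x)(e_i)$ is identified with $\partial Q/\partial x-\partial P/\partial y$. The rest is the sign and \lean{Fin.sum_univ_two} bookkeeping you describe, plus carrying along the hypothesis $a\le b$ that \lean{stokes_smooth} requires.
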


\begin{corollary}[Divergence consistency, \lean{divergence_stokes}]\label{cor:div}
  For a smooth vector field $F : \RR^{n+1} \to \RR^{n+1}$ and a box $\Icc{a}{b}$:
  \[
    \int_{\Icc{a}{b}} \mathrm{div}(F)\,\dif V
      = \sum_{i=0}^{n} \left(
        \int_{\mathrm{face}_i} F_i(b_i, \hat{x}_i)\,\dif\hat{x}_i
        - \int_{\mathrm{face}_i} F_i(a_i, \hat{x}_i)\,\dif\hat{x}_i
      \right)
  \]
  The signs cancel: the form uses $\omega_i = (-1)^i F_i$ and \lean{signedCoeff}
  applies another $(-1)^i$, so the boundary integrand is
  $(-1)^{2i} F_i = F_i$ (since $(-1)^{2i} = 1$).
  This is a \emph{consistency check}: since our main theorem is proved FROM the box
  divergence theorem, recovering the divergence theorem from Stokes shows the
  sign conventions round-trip correctly. It is not an independent derivation.
\end{corollary}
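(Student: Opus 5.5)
The statement immediately above is Corollary~\ref{cor:div} (divergence consistency). I will obtain it as a direct instantiation of \Cref{thm:stokes-smooth} (\lean{stokes_smooth}) on the box $\Icc{a}{b}$. The first step is to choose the coordinate $n$-form $\omega_i(x) := (-1)^i F_i(x)$. Each $\omega_i$ is a constant multiple of the $C^\infty$ component $F_i$, so \lean{IsSmooth} $\omega$ follows from \lean{ContDiff.const_smul} (or \lean{contDiff_const.mul}) together with the smoothness of each component $F_i$. That smoothness is projection composed with $F$ (\lean{contDiff_pi}).

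With smoothness in hand, \lean{stokes_smooth} gives $\mathrm{boxIntegral}(\mathrm{extDerivCoord}\,\omega)\,a\,b = \mathrm{bdryIntegral}\,\omega\,a\,b$. It then remains to rewrite each side.

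\textbf{Boundary side.} By definition the integrand on face $i$ is \lean{signedCoeff} $\omega$ $i = (-1)^i\omega_i = (-1)^i(-1)^iF_i$. I will prove a pointwise lemma $\mathrm{signedCoeff}\,\omega\,i = F_i$ using $(-1)^i(-1)^i = (-1)^{2i} = 1$ (e.g.\ \lean{pow_add} with \lean{Even.neg_one_pow}, or simply \lean{mul_self} of $\pm1$ via \lean{neg_one_pow_mul_eq_zero_iff}-style simp). I then rewrite under the integrals with \lean{funext}. This makes the right-hand side literally the displayed face sum.

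\textbf{Volume side.} $\mathrm{extDerivCoord}\,\omega\,(x) = \sum_i (-1)^i\,\mathrm{fderiv}\,\omega_i\,x\,(e_i)$. Since $\omega_i = (-1)^iF_i$, linearity of \lean{fderiv} (\lean{fderiv_const_smul}, which needs differentiability of $F_i$, available from smoothness) gives $\mathrm{fderiv}\,\omega_i\,x = (-1)^i\,\mathrm{fderiv}\,F_i\,x$. The same sign cancellation as on the boundary side then reduces the summand to $\partial F_i/\partial x_i(x)$, whose sum is $\mathrm{div}(F)(x)$ by the definition of divergence as the sum of $\mathrm{fderiv}\,F_i\,x\,(e_i)$. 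I will state this as a pointwise equality and close with \lean{integral_congr}/\lean{setIntegral_congr}.

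The main obstacle I expect is bookkeeping rather than mathematics. The divergence is most likely defined through the component functions $x\mapsto F(x)\,i$, so I must match $\mathrm{fderiv}$ of the component with the $i$-th component of $\mathrm{fderiv}\,F$. The clean route is to define everything via the components $F_i$ and never differentiate $F$ itself. If the divergence is instead stated with $\mathrm{fderiv}\,F$, I would bridge the two using \lean{fderiv_pi} or \lean{ContinuousLinearMap.proj} together with the chain rule.
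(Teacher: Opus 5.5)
Your proposal is correct and follows the paper's route: you instantiate \lean{stokes_smooth} directly with $\omega_i = (-1)^i F_i$, and the double sign $(-1)^{2i}=1$ cancels on the boundary side through \lean{signedCoeff} and on the volume side through linearity of the Fr\'echet derivative. As you anticipate, the only remaining work is bookkeeping: matching component functions to however $\mathrm{div}(F)$ is phrased in the Lean statement.
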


\begin{corollary}[Rectangular Green's four-edge decomposition, \lean{green_four_edges}]\label{cor:green-edges}
  For the 1-form $\omega = P\,\dif x + Q\,\dif y$ (standard convention
  with $\extd\omega = (\partial_x Q - \partial_y P)\,\dif x\wedge\dif y$),
  the double integral decomposes into four oriented face integrals:
  \[
    \iint_R \left(\frac{\partial Q}{\partial x} - \frac{\partial P}{\partial y}\right) dA
    = \int_{\text{right}} Q - \int_{\text{left}} Q - \int_{\text{top}} P + \int_{\text{bottom}} P
  \]
  where right ($x{=}b_0$), left ($x{=}a_0$), top ($y{=}b_1$), bottom ($y{=}a_1$).
  This gives the explicit face-integral decomposition of the boundary sum.
\end{corollary}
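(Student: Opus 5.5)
We obtain the statement as the instance $n=1$ of \lean{stokes_smooth} (\Cref{thm:stokes-smooth}), which is already packaged as \Cref{cor:green}; the remaining work is to unfold the boundary sum into its four edges. Concretely, take the coordinate $1$-form $[Q,P]$ with $\omega_0 = Q$ and $\omega_1 = P$, which encodes $\omega = P\,\dif x + Q\,\dif y$. Then \lean{extDerivCoord} gives
\[
\partial_0\omega_0 - \partial_1\omega_1 = \partial_x Q - \partial_y P .
\]
This is exactly the left-hand integrand, so the left-hand side is literally the \lean{boxIntegral} in \Cref{cor:green}. Smoothness of $Q,P$ supplies \lean{IsSmooth}, so it remains to rewrite the right-hand side \lean{bdryIntegral [Q,P] a b}.

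First, expand the sum over $\mathrm{Fin}\,2$ with \lean{Fin.sum_univ_two}, producing the $i=0$ and $i=1$ terms. Next, evaluate the signed coefficients: $f_0 = (-1)^0 Q = Q$ and $f_1 = (-1)^1 P = -P$. The $i=0$ term is then (face at $x=b_0$) $-$ (face at $x=a_0$) of $Q$, which is right $-$ left. The $i=1$ term is (face at $y=b_1$) $-$ (face at $y=a_1$) of $-P$, which equals $-\text{top} + \text{bottom}$ after pulling the negation out with \lean{integral_neg}. Reassociating with \lean{ring} or \lean{abel} gives the claimed ordering.

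The main obstacle is identifying each face term, an integral over a box in $\mathrm{Fin}\,1 \to \RR$ of $f_i \circ \mathrm{Fin.insertNth}\,i\,c$, with the intended edge integral. Two facts are needed:
\begin{itemize}
\item The inserted point is $(c,t_0)$ for $i=0$ and $(t_0,c)$ for $i=1$; we would prove these with \lean{Fin.insertNth_apply_same} and \lean{Fin.insertNth_apply_succAbove}, or by \lean{funext} and \lean{Fin.cases}.
\item The face box is the image of $[a_j,b_j]$ under the measure-preserving equivalence $(\mathrm{Fin}\,1\to\RR)\simeq\RR$ (\lean{MeasurableEquiv.funUnique} and its \lean{volume_preserving} lemma).
\end{itemize}
We would reuse the face-evaluation lemmas from \lean{bdryIntegral_fin1} in the FTC corollary (\Cref{cor:ftc}) where possible.

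If the edges are defined as plain set integrals of the restricted functions, this transport step is the only nontrivial one. Linearity manipulations such as splitting differences of integrals do not require integrability if we keep the terms as differences throughout and only use \lean{integral_neg}, which holds unconditionally.
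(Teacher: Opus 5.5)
Your proposal is correct and follows essentially the paper's route: the paper treats this corollary as a direct instantiation of \lean{stokes_smooth} (equivalently \Cref{cor:green}) at $n=1$ with $(\omega_0,\omega_1)=(Q,P)$, and you then expand the two-term boundary sum using $f_0=Q$ and $f_1=-P$ to read off right $-$ left $-$ top $+$ bottom. One small clarification: \lean{bdryIntegral_fin1} handles the $n=0$ case, where the faces are points, so only its \lean{Fin.insertNth} evaluation idea carries over here, and your transport to $\RR$ is needed only if the edges are written as interval integrals rather than as integrals over boxes in $\mathrm{Fin}\,1\to\RR$.
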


\begin{corollary}[Integration by parts, \lean{integration_by_parts}]\label{cor:ibp}
  For smooth $f, g : \RR \to \RR$ and $a \leq b$:
  \[
    \int_a^b f(x) g'(x)\,\dif x = f(b)g(b) - f(a)g(a) - \int_a^b f'(x)g(x)\,\dif x
  \]
  Derived from \lean{ftc_stokes} applied to $f \cdot g$ together with the product rule.
\end{corollary}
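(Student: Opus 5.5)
The plan is to derive Corollary~\ref{cor:ibp} from the Stokes-derived FTC (Corollary~\ref{cor:ftc}, \lean{ftc_stokes}) applied to the product $h := f\cdot g$, and then split the resulting integral using linearity. The argument has four steps.

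\emph{Step 1: smoothness of $h$.} Since $f$ and $g$ are $C^\infty$, the product $h$ is $C^\infty$. This is the \mathlib{} product lemma for \lean{ContDiff}. Hence the hypothesis of Corollary~\ref{cor:ftc} is satisfied, and it gives
\[
  \int_{[a,b]} h'(x)\,\dif x = f(b)g(b) - f(a)g(a).
\]

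\emph{Step 2: product rule.} By the product rule for derivatives, $h'(x) = f'(x)g(x) + f(x)g'(x)$ for every $x$. The differentiability side conditions follow from smoothness. I would prove this as a pointwise equality of functions and rewrite the integrand under the integral sign.

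\emph{Step 3: splitting the integral.} Next I split $\int_{[a,b]} (f'g + fg')$ into $\int_{[a,b]} f'g + \int_{[a,b]} fg'$. This needs integrability of both summands on $[a,b]$. Each summand is continuous, because $f'$ and $g'$ are continuous as derivatives of $C^1$ functions. Continuous functions are integrable on the compact set $\Icc{a}{b}$, and the integrability lemma for continuous functions on compact sets supplies this. Rearranging then gives the stated identity.

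\emph{Step 4: matching the integral forms.} The main obstacle is bookkeeping rather than analysis. Corollary~\ref{cor:ftc} is phrased as a set integral over $\Icc{a}{b}$, while the statement uses the interval integral $\int_a^b$. I would convert between them with the \mathlib{} lemma identifying $\int_a^b$ with the set integral over $[a,b]$ (via $\mathrm{Ioc}$) when $a\le b$, together with the fact that $\mathrm{Icc}$ and $\mathrm{Ioc}$ differ by a null set. I would also check how $f'$ is expressed, as \lean{deriv} versus the \lean{fderiv}-evaluated-at-$1$ form that comes out of \lean{extDerivCoord_of_comp_proj}, and normalize both sides to \lean{deriv} before applying Steps~2 and~3. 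Alternatively, the 1D agreement check of \Cref{sec:validation} may already supply this translation.
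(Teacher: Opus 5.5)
Your proposal is correct and follows the paper's route: apply the Stokes-derived FTC (\lean{ftc_stokes}) to $f\cdot g$, rewrite the integrand with the product rule, and split the integral by linearity. Splitting is justified because both summands are continuous and hence integrable on $[a,b]$. The Step~4 bookkeeping you flag (set integral versus interval integral, \lean{deriv} versus the \lean{fderiv} form) is glue the paper leaves implicit and does not change the argument.
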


\subsection{1D Agreement Check}
\label{sec:validation}

A natural concern with the coordinate approach is whether the Stokes theorem
produces correct integral values. We validate this for the 1D case ($n{=}0$, the FTC)
by computing via two
independent paths and showing they agree:
\begin{enumerate}
  \item \textbf{Path 1 (Stokes):} \lean{stokes_smooth} $\to$ \lean{ftc_stokes}
    $\to$ $f(b) - f(a)$
  \item \textbf{Path 2 (interval integral):}\\
    \lean{intervalIntegral.integral_eq_sub_of_hasDerivAt}
    $\to$ \lean{ftc_intervalIntegral} $\to$ $f(b) - f(a)$
\end{enumerate}

The agreement theorem (\lean{ftc_paths_agree}) shows both paths produce the same result.
The Stokes path uses set integrals over $\mathrm{Fin}\,1 \to \RR$ of $\mathrm{fderiv}$,
while the interval-integral path uses mathlib's direct FTC
(\lean{intervalIntegral.integral_eq_sub_of_hasDerivAt}).
Their agreement provides a sanity check that our box-integral machinery computes
standard calculus integrals correctly in the $n{=}0$ case, though both paths ultimately
rely on mathlib's analysis library (the divergence theorem for path 1, direct FTC for path 2).
The validation is therefore a consistency check of our definitions and sign
conventions against mathlib's independent FTC path, not a check of mathlib's
analysis foundations themselves.

\section{Formal Sums and Linear Extension}
\label{sec:chains}

We define formal $\ZZ$-linear combinations of boxes and extend Stokes by linearity:

\begin{definition}[Cubical box and chain]
  A \emph{cubical box} is a structure $(a, b, h)$ where $a \leq b$ componentwise.
  A \emph{cubical chain} is a formal $\ZZ$-linear combination of boxes:
  \lean{CubicalChain n := CubicalBox n ->0 Z} (where \texttt{->0} denotes
  finitely-supported functions, i.e., \lean{Finsupp}).
\end{definition}

Integration extends linearly to chains:
\begin{lstlisting}
def integrateExterior (c : CubicalChain n) (omega : CoordNForm n) : Real :=
  c.sum fun B k => (k : Real) * boxIntegral (extDerivCoord omega) B.lo B.hi
\end{lstlisting}

\begin{theorem}[Stokes for chains, \lean{stokes_chain}]\label{thm:stokes-chain}
  For any smooth form $\omega$ and chain $c$:
  \[\mathrm{integrateExterior}(c, \omega) = \mathrm{integrateBdry}(c, \omega)\]
  The proof is by \lean{Finsupp.sum_congr}, applying \lean{stokes_smooth} to each box.
\end{theorem}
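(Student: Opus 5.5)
The proof reduces chain-level Stokes to box Stokes, one box at a time. Both sides are defined as \lean{Finsupp.sum} over the same chain $c$:
\[
\mathrm{integrateExterior}(c,\omega)=\sum_{B} c(B)\cdot \int_{B}\extd\omega,
\qquad
\mathrm{integrateBdry}(c,\omega)=\sum_{B} c(B)\cdot \int_{\bdry B}\omega .
\]
So the plan is to show the two summands agree for every box $B$ in the support of $c$, and then invoke \lean{Finsupp.sum_congr}. No induction on the chain is needed, since both sides are literally sums over the same finite support with the same coefficients.

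The steps are as follows.

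First, unfold \lean{integrateExterior} and \lean{integrateBdry}. This exposes the two \lean{Finsupp.sum} expressions, with summands $(k:\RR)\cdot\mathrm{boxIntegral}(\mathrm{extDerivCoord}\,\omega)\,B.\mathrm{lo}\,B.\mathrm{hi}$ and $(k:\RR)\cdot\mathrm{bdryIntegral}\,\omega\,B.\mathrm{lo}\,B.\mathrm{hi}$.

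Second, apply \lean{Finsupp.sum_congr}. This reduces the goal to fixing a box $B$ (and its coefficient $k=c(B)$) and proving the summands equal.

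Third, rewrite the interior integral using \Cref{thm:stokes-smooth} (\lean{stokes_smooth}). Its hypotheses are the order proof $B.\mathrm{lo}\le B.\mathrm{hi}$, which is the field $h$ of the \lean{CubicalBox} structure, and the smoothness hypothesis \lean{IsSmooth omega}, which is the chain theorem's assumption. The two summands then coincide, and the goal closes by \lean{congrArg} or \lean{rfl}.

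The main obstacle is bookkeeping rather than mathematics: I expect the difficulty to lie in making the definitions line up syntactically. In particular, \lean{integrateBdry} should use the same coercion $(k:\RR)$ and the same argument order as \lean{integrateExterior}. It should also be defined as a sum rather than, say, a sum over \lean{singularBoundarySingle}-style face chains, so that after \lean{sum_congr} the remaining goal is exactly an instance of \lean{stokes_smooth}. If \lean{integrateBdry} were instead defined by first forming a boundary chain of faces and then integrating, I would first prove a lemma identifying the integral of that boundary chain of a single box with \lean{bdryIntegral}, and use it to rewrite before applying \lean{sum_congr}.

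The membership hypothesis $B\in\mathrm{supp}(c)$ supplied by \lean{sum_congr} is not needed, because \lean{stokes_smooth} holds for every box.
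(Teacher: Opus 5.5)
Your proposal is correct and takes essentially the same route as the paper. You unfold both chain integrals as \lean{Finsupp.sum}s over $c$, apply \lean{Finsupp.sum_congr}, and close each summand with \lean{stokes_smooth}, using the box's order field $B.\mathrm{lo}\le B.\mathrm{hi}$ and the \lean{IsSmooth} hypothesis; like the paper, you need no induction on the chain.
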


Additivity properties (\lean{integrateExterior_add}, \lean{integrateBdry_add}) are
proved via \lean{Finsupp.sum_add_index'}.

\subsection{Two-Box Additivity and Shared-Face Cancellation}
\label{sec:subdivision}

A natural question is whether the Stokes identity is compatible with decomposing a box
into two adjacent sub-boxes.
The subdivision module shows that for any two boxes (forming a 2-element chain), the
Stokes identity holds for the combined chain:

\begin{theorem}[Two-box Stokes additivity, \lean{subdivision_stokes_equiv}]
  For two boxes $B_1$, $B_2$ (with $a_i \le b_i$) and a smooth form $\omega$:
  \[
    \int_{B_1} d\omega + \int_{B_2} d\omega = \int_{\partial B_1} \omega + \int_{\partial B_2} \omega
  \]
  When $B_1$ and $B_2$ share a face (subdivision of a larger box), the shared face
  contributions cancel in the boundary sum, recovering the Stokes identity for the
  two-box formal chain.
\end{theorem}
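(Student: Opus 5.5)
The plan is to reduce the two-box statement to two separate applications of \lean{stokes_smooth} (\Cref{thm:stokes-smooth}), one for each box, and then add them. Concretely, for boxes $B_1 = \Icc{a^{(1)}}{b^{(1)}}$ and $B_2 = \Icc{a^{(2)}}{b^{(2)}}$ with $a^{(k)} \le b^{(k)}$ and $\omega$ satisfying \lean{IsSmooth omega}, \Cref{thm:stokes-smooth} gives
\[
  \int_{B_k}\extd\omega = \int_{\bdry B_k}\omega \qquad (k=1,2),
\]
and summing these two equalities yields the displayed identity; in Lean this is a single \lean{rw} with both instances or \lean{congrArg2 (.+.)}. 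To phrase it as a statement about the formal chain $c = B_1 + B_2$ (the two-element \lean{Finsupp} \lean{single B1 1 + single B2 1}), I will instead invoke \Cref{thm:stokes-chain} on $c$ and then unfold \lean{integrateExterior} and \lean{integrateBdry} using the additivity lemmas \lean{integrateExterior_add}, \lean{integrateBdry_add} together with \lean{Finsupp.sum_single_index}, so that each side becomes $1\cdot\int_{B_1}+1\cdot\int_{B_2}$. The casts $(1:\ZZ)\mapsto 1:\RR$ are discharged by \lean{simp}.

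The subtle point is the case $B_1 = B_2$, where the \lean{Finsupp} collapses to \lean{single B1 2}. Going through the additivity lemmas rather than evaluating the support directly sidesteps this, because \lean{integrateExterior_add} holds unconditionally. This is why I prefer that route over computing \lean{Finsupp.sum} on an explicit two-element support.

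For the shared-face remark, I treat it as a consequence rather than part of the formal identity. Suppose $B_1, B_2$ subdivide a box $B$ along coordinate $i$ at the value $c$, so that $b^{(1)}_i = c = a^{(2)}_i$ and all other coordinates agree. In $\int_{\bdry B_1}\omega + \int_{\bdry B_2}\omega$, the $i$-th term of $B_1$ contributes $+\int f_i(c,\hat x_i)$ and the $i$-th term of $B_2$ contributes $-\int f_i(c,\hat x_i)$ over the same face box, so these cancel definitionally after rewriting the face endpoints. For $j\ne i$, the face integrals over $\prod_{l\ne j}$ split by additivity of the set integral along coordinate $i$ (\lean{integral_union} on adjacent intervals with null intersection). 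I expect this splitting step to be the main obstacle, namely matching the face box of $B$ with the union of the face boxes of $B_1, B_2$ inside \lean{Fin.insertNth} coordinates. If it proves awkward, I would restrict the formal theorem to the plain two-box sum and keep the cancellation as the informal observation above, which is how the statement is worded.
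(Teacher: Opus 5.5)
Your proposal is correct and follows the paper's route: the displayed identity is \lean{stokes_smooth} applied to each box and summed (the paper closes it with \lean{linarith}), and the two-box chain version follows from \lean{stokes_chain}. Your shared-face argument matches the paper's separate \lean{shared_face_cancel} lemma under the \lean{Adjacent} hypothesis. The splitting of the $j \neq i$ face integrals, which you flag as the main obstacle, is not required by the statement and is not formalized in the paper.
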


This is proved as a direct consequence of the single-box Stokes theorem (by \lean{linarith}).
The chain-level version (\lean{stokes_two_boxes}) follows from \lean{stokes_chain}.

We additionally formalize the key \emph{shared-face cancellation}:

\begin{definition}[Adjacency, \lean{Adjacent}]
  Two boxes $B_1, B_2$ are \textbf{adjacent in direction $i$} if
  $B_1.\mathrm{hi}[i] = B_2.\mathrm{lo}[i]$ and the remaining coordinates agree.
\end{definition}

\begin{theorem}[Shared-face cancellation, \lean{shared_face_cancel}]
  If $B_1$ and $B_2$ are adjacent in direction $i$, then the integral over
  the high-face of $B_1$ at direction $i$ equals the integral over the
  low-face of $B_2$ at direction $i$:
  \[
    \int_{\mathrm{face}}\!f_i|_{x_i = B_1.\mathrm{hi}[i]} =
    \int_{\mathrm{face}}\!f_i|_{x_i = B_2.\mathrm{lo}[i]}
  \]
  Since these appear with opposite signs in the combined boundary sum, they cancel.
\end{theorem}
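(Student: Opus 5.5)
The plan is to prove the statement by showing that the two face integrals are literally integrals of the same function over the same domain. The box boundary integral in direction $i$ has the form
\[
  \int_{\mathrm{face}} f_i\bigl(\mathrm{insertNth}_i(c,\hat{x})\bigr)\,\dif\hat{x},
\]
where $\mathrm{face}=\prod_{j\neq i}[a_j,b_j]$ is parametrized by $\hat{x}\in\RR^n$ through $\mathrm{succAbove}$. Here $f_i=(-1)^i\omega_i$ is the signed coefficient of the form under consideration. The first step is to unfold this face integral for $B_1$ at value $c=B_1.\mathrm{hi}[i]$, and for $B_2$ at value $c=B_2.\mathrm{lo}[i]$.

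Second, I will verify that the two integration domains coincide. The face box of $B_1$ in direction $i$ is $\prod_{k}[B_1.\mathrm{lo}(\mathrm{succAbove}(i,k)),\,B_1.\mathrm{hi}(\mathrm{succAbove}(i,k))]$, and similarly for $B_2$. The definition of \lean{Adjacent} says that the remaining coordinates agree: for $j\neq i$, $B_1.\mathrm{lo}[j]=B_2.\mathrm{lo}[j]$ and $B_1.\mathrm{hi}[j]=B_2.\mathrm{hi}[j]$. Since $\mathrm{succAbove}(i,k)\neq i$ always holds (\lean{Fin.succAbove_ne}), the lower and upper corner vectors of the two face boxes agree pointwise. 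Function extensionality then turns this into equality of the two \lean{Icc} sets.

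Third, I will verify that the integrands coincide. The inserted value is $B_1.\mathrm{hi}[i]$ on one side and $B_2.\mathrm{lo}[i]$ on the other, and these are equal by the first clause of adjacency. Rewriting along this equation makes the integrands syntactically identical, so the integrals agree by \lean{congr}/\lean{rfl}. No analytic hypothesis on $\omega$ is needed, because this is purely a statement about equal integrals.

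The main obstacle I expect is the bookkeeping in the second step. It requires expressing the face domain in exactly the form the boundary integral uses and pushing the ``$j\neq i$'' hypothesis through the $\mathrm{succAbove}$ reindexing. To handle it I will state a small lemma: if $\mathrm{lo}_1 j=\mathrm{lo}_2 j$ and $\mathrm{hi}_1 j=\mathrm{hi}_2 j$ for all $j\neq i$, then the projected face corners are equal. I will prove it by \lean{funext} followed by applying the hypothesis at $\mathrm{succAbove}(i,k)$ together with \lean{Fin.succAbove_ne}. The final step is then a \lean{simp only} rewrite of both face corners and the inserted constant. The cancellation remark follows immediately, because these two terms enter the combined boundary sum with coefficients $+1$ (the high face of $B_1$) and $-1$ (the low face of $B_2$).
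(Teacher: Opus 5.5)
Your proposal is correct and is the direct unfolding-and-rewriting argument the paper relies on (it states \lean{shared_face_cancel} without a written proof). Adjacency, via $\mathrm{succAbove}(i,k)\neq i$, identifies the face corners $B_1.\mathrm{lo}\circ\mathrm{succAbove}(i,\cdot)$ and $B_1.\mathrm{hi}\circ\mathrm{succAbove}(i,\cdot)$ with those of $B_2$, and also gives $B_1.\mathrm{hi}[i]=B_2.\mathrm{lo}[i]$, so both sides are the same integral of the same function over the same set. No analytic hypothesis is needed, and the cancellation remark follows from the high-minus-low form of the boundary sum.
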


\begin{corollary}[\lean{adjacent_boundary_simplifies}]
  The net contribution of the shared face to the combined boundary is zero.
\end{corollary}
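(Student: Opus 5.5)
The last statement is the corollary \lean{adjacent_boundary_simplifies}: when $B_1$ and $B_2$ are adjacent in direction $i$, the shared face contributes zero to the combined boundary sum. The plan is to isolate the direction-$i$ terms of $\int_{\bdry B_1}\omega + \int_{\bdry B_2}\omega$ that come from the shared hyperplane $x_i = B_1.\mathrm{hi}[i] = B_2.\mathrm{lo}[i]$, and show that they cancel.

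First, I unfold \lean{bdryIntegral} for each box as a sum over $j$ of $(\text{high face}_j) - (\text{low face}_j)$. In the combined sum, the shared-face contribution is exactly two terms: $+\int_{\mathrm{face}} f_i|_{x_i = B_1.\mathrm{hi}[i]}$, coming from $B_1$'s high face, and $-\int_{\mathrm{face}} f_i|_{x_i = B_2.\mathrm{lo}[i]}$, coming from $B_2$'s low face. In Lean I would state the corollary directly as the vanishing of this difference, i.e.\ the high-face integral of $B_1$ minus the low-face integral of $B_2$ in direction $i$ equals zero. I would not try to extract these terms from the full \lean{Finset.sum}.

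Second, I show the two face integrals are literally equal by invoking the shared-face cancellation theorem (\lean{shared_face_cancel}). Its two conclusions agree for two reasons. The inserted coordinate value is the same, since $B_1.\mathrm{hi}[i] = B_2.\mathrm{lo}[i]$. The face domains $\prod_{j\ne i}[a_j,b_j]$ also coincide, because adjacency says the remaining coordinates of the two boxes agree. The corollary then follows by \lean{sub_self}, or by \lean{linarith} using that equality.

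The main obstacle is already inside \lean{shared_face_cancel}: the face box is indexed through \lean{Fin.succAbove}/\lean{Fin.removeNth}. To identify the two face domains one must rewrite $B_1.\mathrm{lo}\circ\mathrm{succAbove}(i)$ to $B_2.\mathrm{lo}\circ\mathrm{succAbove}(i)$, and similarly for the upper corners, using the adjacency hypothesis at indices $\mathrm{succAbove}(i,k)\neq i$ (via \lean{Fin.succAbove_ne}). I expect that step to go through with \lean{funext} plus \lean{congr}. Given the theorem as stated earlier, the corollary itself is a one-line algebraic consequence.
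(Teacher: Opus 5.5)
Your proposal is correct and takes essentially the paper's route: the corollary follows immediately from \lean{shared_face_cancel}. The high face of $B_1$ and the low face of $B_2$ in direction $i$ enter the combined boundary sum with opposite signs, so their difference vanishes by \lean{sub_self} or \lean{linarith}; your remarks about matching the face domains via \lean{Fin.succAbove} concern the proof of \lean{shared_face_cancel} itself and are not needed for the corollary.
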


\subsection{Paired Codimension-2 Face Cancellation}
\label{sec:bdry-sq}

The fundamental property of the boundary operator in a chain complex is $\partial^2 = 0$:
applying the boundary twice gives zero. We prove the \emph{sign-arithmetic core} of this
identity: for each codimension-2 face in the double boundary, the two paths that reach it
produce opposite signed contributions. This is the essential algebraic content of $\partial^2 = 0$,
though we do not define $\partial$ as a chain-group homomorphism (our undecorated box type
does not distinguish upper/lower faces as separate chain elements).

\begin{theorem}[Sign cancellation, \lean{bdry_sq_sign_cancel}]\label{thm:bdry-sq}
  For any $n$, indices $i : \mathrm{Fin}(n+3)$, $j : \mathrm{Fin}(n+2)$, and
  orientation booleans $\varepsilon, \eta$:
  \[
    (-1)^{(i + \varepsilon)} \cdot (-1)^{(j + \eta)} +
    (-1)^{(\mathrm{succAbove}(i,j) + \eta)} \cdot (-1)^{(\mathrm{predAbove}(j,i) + \varepsilon)} = 0
  \]
  The proof uses the parity lemma (\lean{succAbove_predAbove_neg_one_pow}):
  \[
    (\mathrm{succAbove}(i,j) + \mathrm{predAbove}(j,i)) \not\equiv (i + j) \pmod{2}
  \]
  which is proved by unfolding \lean{succAbove}/\lean{predAbove} to natural number
  arithmetic and using \lean{split_ifs <;> omega}, the same technique used by
  mathlib's \lean{succAbove_succAbove_succAbove_predAbove}.
\end{theorem}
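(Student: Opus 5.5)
The plan is to reduce the statement to the parity lemma and then prove that lemma by case analysis on natural numbers. Write $p = i + \varepsilon + j + \eta$ and $q = \mathrm{succAbove}(i,j) + \eta + \mathrm{predAbove}(j,i) + \varepsilon$, where the booleans are read as $0/1$. Combining powers via \lean{pow_add}, the left-hand side equals $(-1)^p + (-1)^q$. It vanishes exactly when $p$ and $q$ have opposite parity. Since $\varepsilon$ and $\eta$ occur once in each exponent, this reduces to
\[
  \mathrm{succAbove}(i,j) + \mathrm{predAbove}(j,i) \not\equiv i + j \pmod 2,
\]
which is the parity lemma \lean{succAbove_predAbove_neg_one_pow}. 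Given the lemma, the theorem follows by rewriting $(-1)^q = -(-1)^p$ (using $(-1)^{m+1} = -(-1)^m$ and $(-1)^{2k}=1$) and closing with \lean{ring}/\lean{simp}.

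For the parity lemma, unfold the definitions on values with $i : \mathrm{Fin}(n+3)$ and $j : \mathrm{Fin}(n+2)$:
\begin{itemize}
  \item $\mathrm{succAbove}(i,j)$ is $j$ if $j < i$, and $j+1$ otherwise;
  \item $\mathrm{predAbove}(j,i)$ is $i-1$ if $j < i$, and $i$ otherwise.
\end{itemize}
The two cases are then as follows.
\begin{itemize}
  \item If $j < i$, the sum is $j + (i-1)$. Because $i \ge 1$, this equals $i+j-1$, which differs from $i+j$ by one.
  \item If $j \ge i$, the sum is $(j+1) + i$, again off by one from $i+j$.
\end{itemize}
In Lean this is \lean{split_ifs} followed by \lean{omega}. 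To make it \lean{omega}-friendly, state the lemma as an existence claim: there is $k$ with the sum $= i+j+2k+1$, or the same with $i+j = \text{sum} + 2k + 1$. This avoids reasoning about \lean{Nat} mod directly.

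The main obstacle I expect is bookkeeping, not mathematics:
\begin{itemize}
  \item \textbf{Casts.} The exponents mix \lean{Fin} coercions, \lean{Bool.toNat}, and real powers, and \lean{predAbove} is defined via \lean{castPred}/\lean{pred} with dependent proofs. Getting \lean{simp} to expose plain \lean{Nat} values is the hard part. I would first try \lean{Fin.succAbove}'s and \lean{Fin.predAbove}'s \lean{_apply}/\lean{val} lemmas (e.g.\ \lean{Fin.coe_succAbove}-style rewrites), or mirror mathlib's \lean{succAbove_succAbove_succAbove_predAbove} proof.
  \item \textbf{Truncated subtraction.} The subtraction $i-1$ must be justified from $j<i$.
  \item \textbf{Booleans.} I would case on $\varepsilon$ and $\eta$ (four cases) to remove them and finish with \lean{norm_num} plus the parity lemma.
\end{itemize}
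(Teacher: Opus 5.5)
Your proposal is correct and follows essentially the same route as the paper. You reduce the sign identity to the parity lemma \texttt{succAbove\_predAbove\_neg\_one\_pow}, since $\varepsilon$ and $\eta$ appear once in each combined exponent and drop out, and you prove that lemma by unfolding \texttt{succAbove}/\texttt{predAbove} to natural-number arithmetic, splitting on whether $j<i$, and closing with \texttt{omega}; your two-case computation, $j+(i-1)$ versus $(j+1)+i$, is exactly what the paper's \texttt{split\_ifs} followed by \texttt{omega} discharges.
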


\begin{theorem}[Geometric + algebraic, \lean{bdry_sq_geometric}]\label{thm:bdry-geom}
  For a box $B$ in $\RR^{n+3}$, the two paths to each codimension-2 face produce:
  \begin{enumerate}
    \item The same face box (by the face-of-face identity), and
    \item Opposite signs (by sign cancellation).
  \end{enumerate}
  This establishes the sign-arithmetic component of $\partial^2 = 0$ in cubical
  homology. A full $\partial^2 = 0$ would additionally require defining $\partial$
  as a map between chain groups, which requires distinguishing upper/lower faces
  (not possible with our undecorated box type).
\end{theorem}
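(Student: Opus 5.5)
The plan is to assemble Theorem~\ref{thm:bdry-geom} from two independent ingredients, one geometric and one arithmetic, and then combine them.

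\textbf{Geometric part.} Fix a box $B=[a,b]\subset\RR^{n+3}$, an outer face index $i\in\mathrm{Fin}(n+3)$ with side $\varepsilon$, and an inner index $j\in\mathrm{Fin}(n+2)$ with side $\eta$. The first path removes coordinate $i$ at value $\varepsilon$ (giving a box in $\RR^{n+2}$ with bounds $a\circ\mathrm{succAbove}(i)$, $b\circ\mathrm{succAbove}(i)$), then removes coordinate $j$ at $\eta$. The partner path removes coordinate $\mathrm{succAbove}(i,j)$ first, at side $\eta$, and then coordinate $\mathrm{predAbove}(j,i)$ at side $\varepsilon$. Both resulting boxes in $\RR^{n+1}$ have bounds given by precomposing $a,b$ with a composite of two $\mathrm{succAbove}$ maps. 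So the face-of-face identity reduces to the index equation
\[
\mathrm{succAbove}(i)\circ\mathrm{succAbove}(j)=\mathrm{succAbove}(\mathrm{succAbove}(i,j))\circ\mathrm{succAbove}(\mathrm{predAbove}(j,i)),
\]
which is exactly \mathlib{}'s \lean{succAbove_succAbove_succAbove_predAbove}. The pair of fixed values is likewise transported consistently. In Lean, box equality then follows by extensionality of the \lean{lo}/\lean{hi} fields, rewriting each coordinate with this lemma. The proof fields $a\le b$ are handled by proof irrelevance.

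\textbf{Algebraic part.} The boundary signs along the two paths are $(-1)^{i+\varepsilon}(-1)^{j+\eta}$ and $(-1)^{\mathrm{succAbove}(i,j)+\eta}(-1)^{\mathrm{predAbove}(j,i)+\varepsilon}$. I would invoke Theorem~\ref{thm:bdry-sq} directly, which says these sum to zero. Its underlying parity fact is that $\mathrm{succAbove}(i,j)+\mathrm{predAbove}(j,i)$ and $i+j$ have opposite parity. This can be checked by cases on whether $j<i$ (values $j$ and $i-1$) or $j\ge i$ (values $j+1$ and $i$), and in either case the total differs from $i+j$ by exactly one.

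\textbf{Main obstacle.} I expect the hard step to be the geometric one. The difficulty is not the mathematics but the dependent-type bookkeeping: matching the face-box constructions, with their \lean{Fin} casts and the inserted constant coordinates, against the mathlib lemma's exact form. I would first state a helper lemma computing the bounds of the face-of-face as an explicit composition, and only then rewrite with the mathlib identity. The final theorem is then the conjunction of the two parts.
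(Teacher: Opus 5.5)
Your proposal is correct and follows the paper's approach. The paper also obtains the theorem as the conjunction of the face-of-face identity (\texttt{double\_boundary\_same\_face}, which reduces to the \texttt{succAbove} composition identity you cite) and the sign cancellation of Theorem~\ref{thm:bdry-sq}, whose parity lemma is proved by essentially the same $j<i$ versus $j\ge i$ case split. Your remark about transporting the fixed values is superfluous: the undecorated box type records only the projected bounds, so box equality is just equality of the \texttt{lo}/\texttt{hi} fields.
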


\section{Algebraic Properties}
\label{sec:properties}

The exterior derivative \lean{extDerivCoord} is a linear operator:

\begin{proposition}[Linearity]\label{prop:linearity}
  For differentiable forms $\omega, \eta$ and scalar $c \in \RR$:
  \begin{align}
    \extd(\omega + \eta) &= \extd\omega + \extd\eta \\
    \extd(c \cdot \omega) &= c \cdot \extd\omega \\
    \extd(-\omega) &= -\extd\omega \\
    \extd(0) &= 0
  \end{align}
  Each is proved by the corresponding linearity property of \lean{fderiv}
  (\lean{fderiv_add}, \lean{fderiv_const_mul}, etc.).
\end{proposition}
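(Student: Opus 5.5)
The plan is to reduce each identity to the definition of \lean{extDerivCoord} and use linearity of the Fréchet derivative. By definition,
\[
  \mathrm{extDerivCoord}(\omega)(x) = \sum_{i=0}^{n} (-1)^i\, D\omega_i(x)\cdot e_i ,
\]
that is, the signed sum of directional derivatives $\mathrm{fderiv}\,\RR\,(\omega_i)\,x$ applied to \lean{Pi.single i 1}. Addition, scalar multiplication and negation of coordinate $n$-forms are componentwise, so $(\omega+\eta)_i = \omega_i + \eta_i$ and so on. I will therefore rewrite each coefficient derivative, apply the matching \lean{fderiv} lemma, and then redistribute the finite sum.

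\textbf{Addition.} Assume every $\omega_i$ and $\eta_i$ is differentiable at $x$. I will rewrite $\mathrm{fderiv}(\omega_i+\eta_i)\,x = \mathrm{fderiv}\,\omega_i\,x + \mathrm{fderiv}\,\eta_i\,x$ using \lean{fderiv_add}. Evaluating at $e_i$ gives a sum, since \lean{ContinuousLinearMap.add_apply} splits it. Then \lean{mul_add} and \lean{Finset.sum_add_distrib} separate the signed sum into $\extd\omega + \extd\eta$.

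\textbf{Scalar multiple.} I will use \lean{fderiv_const_mul}, which needs differentiability of each $\omega_i$; alternatively \lean{fderiv_const_smul} works. This produces $c\cdot D\omega_i(x)e_i$. Commuting $c$ past $(-1)^i$ by \lean{mul_left_comm} and pulling it out with \lean{Finset.mul_sum} gives the result.

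\textbf{Negation and zero.} Negation follows from the scalar case with $c=-1$, or directly from \lean{fderiv_neg}, which needs no differentiability hypothesis, together with \lean{Finset.sum_neg_distrib}. The zero form has constant coefficients. So \lean{fderiv_const} (or \lean{fderiv_zero}) makes every term vanish, and \lean{Finset.sum_const_zero} finishes.

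The only real obstacle is bookkeeping rather than mathematics. Lean must unfold the pointwise operations on \lean{CoordNForm}, which is a plain function type, so that \lean{fderiv_add} matches the form \lean{fun y => omega i y + eta i y}. I would first state small \lean{simp} lemmas such as \lean{(omega + eta) i = fun y => omega i y + eta i y}. I would also thread the differentiability hypotheses, stated per coefficient at $x$, into \lean{simp} calls or \lean{rw} with explicit arguments. After that each case should close with \lean{simp [Finset.sum_add_distrib, Finset.mul_sum]} followed by \lean{ring}.
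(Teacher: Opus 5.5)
Your proposal is correct and follows the paper's route. You unfold \texttt{extDerivCoord} into the signed sum $\sum_i (-1)^i\,D\omega_i(x)\,e_i$, rewrite each coefficient derivative with the matching \texttt{fderiv} lemma (\texttt{fderiv\_add}, \texttt{fderiv\_const\_mul}, \texttt{fderiv\_neg}, \texttt{fderiv\_const}), and redistribute the finite sum, with differentiability needed only for the sum and scalar cases.
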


Combined with Stokes, these give:
\begin{lstlisting}
theorem stokes_smooth_add (a b) (hle : a <= b) (omega eta) (ho he) :
    boxIntegral (extDerivCoord (CoordNForm.add omega eta)) a b =
    bdryIntegral (CoordNForm.add omega eta) a b
\end{lstlisting}

\begin{proposition}[Product Rule / Leibniz]\label{prop:leibniz}
  For a smooth scalar function $f$ and smooth coordinate form $\omega$:
  \[
    \extd(f \cdot \omega)(x) = f(x) \cdot \extd\omega(x) +
    \sum_i (-1)^i \frac{\partial f}{\partial x_i}(x) \cdot \omega_i(x)
  \]
  This is the coordinate expression of the graded Leibniz rule
  $d(f \wedge \omega) = df \wedge \omega + f \cdot d\omega$; the $(-1)^i$ factor
  arises from wedging $df$ (a 1-form) with the basis $n$-form
  $dx_0\wedge\cdots\wedge\widehat{dx_i}\wedge\cdots\wedge dx_n$.
  The proof uses \lean{HasFDerivAt.mul} from \mathlib{}.
\end{proposition}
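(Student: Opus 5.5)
The statement to prove is the Leibniz rule (\Cref{prop:leibniz}). Since $\extd$ here is the coordinate operator \lean{extDerivCoord}, the identity is pointwise and purely first-order. I would unfold both sides to sums over $i$.

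Recall that $(f\cdot\omega)_i(x) = f(x)\,\omega_i(x)$, and that
\[
\extd\eta(x) = \sum_{i=0}^n (-1)^i\, \bigl(D\eta_i(x)\bigr)(e_i),
\]
where $D\eta_i(x)$ is \lean{fderiv Real (eta i) x} and $e_i$ is \lean{Pi.single i 1}. So the left side is
\[
\sum_i (-1)^i\, D(f\omega_i)(x)(e_i).
\]

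For each fixed $i$, I would apply the product rule: \lean{HasFDerivAt.mul}, fed by the \lean{HasFDerivAt} facts for $f$ and $\omega_i$. Both come from \lean{ContDiff.differentiable} with smoothness index $\top \ge 1$. This gives
\[
D(f\omega_i)(x) = f(x)\, D\omega_i(x) + \omega_i(x)\, Df(x).
\]
Rewriting each summand with \lean{HasFDerivAt.fderiv} and evaluating at $e_i$ yields
\[
(-1)^i f(x)\, D\omega_i(x)(e_i) \;+\; (-1)^i\, \omega_i(x)\, Df(x)(e_i).
\]

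Next I would split the sum with \lean{Finset.sum_add_distrib}. The first sum becomes $f(x)\cdot\extd\omega(x)$ after pulling out $f(x)$ with \lean{Finset.mul_sum}. In the second sum, $Df(x)(e_i)$ is by definition the partial derivative $\partial f/\partial x_i$. The remaining mismatch is only commutativity of the factors, which \lean{ring} closes termwise via \lean{Finset.sum_congr}.

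The main obstacle is bookkeeping, not mathematics. \lean{HasFDerivAt.mul} produces a derivative of the form \lean{f x • fderiv (omega i) x + omega i x • fderiv f x}, with smul of continuous linear maps, and this has to be evaluated at $e_i$ via \lean{ContinuousLinearMap.add_apply} and \lean{ContinuousLinearMap.smul_apply}. The pointwise product $f\cdot\omega_i$ must also be presented as a function in exactly the syntactic form the lemma expects; I would first prove a small \lean{funext} lemma identifying \lean{(CoordNForm.smulFun f omega) i} with \lean{fun y => f y * omega i y}. Everything after that is \lean{simp} with these apply lemmas followed by \lean{ring}.
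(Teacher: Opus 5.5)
Your proposal is correct and follows the same route as the paper: unfold \lean{extDerivCoord}, apply \lean{HasFDerivAt.mul} to each product coefficient $f\,\omega_i$, evaluate the resulting derivative at $e_i$, and regroup the sum into $f(x)\cdot\extd\omega(x)$ plus the $(-1)^i\,\partial_i f\,\omega_i$ terms. The paper records only the appeal to \lean{HasFDerivAt.mul}, and your bookkeeping (normalizing the product coefficient, \lean{ContinuousLinearMap.smul_apply}, \lean{Finset.sum_add_distrib}, \lean{Finset.mul_sum}, \lean{ring}) is a faithful elaboration of that one-line description.
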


\begin{proposition}[Concrete Computations]\label{prop:examples}
  We verify the formalization on concrete forms:
  \begin{itemize}
    \item $d(\text{const}) = 0$: constant forms are closed.
    \item $d(x_0) = 1$ in 1D: the exterior derivative of the coordinate
      function recovers the constant 1 (the derivative underlying FTC).
  \end{itemize}
\end{proposition}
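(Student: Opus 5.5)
The statement to prove is Proposition~\ref{prop:examples}: two concrete computations. Both should follow by unfolding \lean{extDerivCoord} to its defining formula
\[
  \extd\omega(x)=\sum_{i=0}^{n}(-1)^i\,\bigl(D\omega_i(x)\bigr)(e_i),
\]
and then evaluating the Fr\'echet derivatives explicitly.

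\emph{Constant forms.} Suppose each coefficient is constant, $\omega_i(x)=c_i$. I will rewrite each \lean{fderiv Real (omega i) x} to $0$ using \lean{fderiv_const} (or \lean{fderiv_const_apply}). Each summand is then $(-1)^i\cdot 0(e_i)=0$. The sum therefore vanishes by \lean{Finset.sum_const_zero}, or just by \lean{simp}.

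There is a cleaner alternative. By Proposition~\ref{prop:linearity}, $\extd(0)=0$. A constant form has the same derivative data as the zero form, so we could argue that way. The direct \lean{simp} route is shorter, so I would try it first.

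\emph{The coordinate function in 1D.} Take $n=0$ and $\omega_0(y)=y_0$, that is, $\omega_0=\mathrm{proj}_0$, a continuous linear map. The formula reduces to a single term:
\[
  (-1)^0\cdot D\mathrm{proj}_0(x)(e_0)=\mathrm{proj}_0(\mathrm{Pi.single}\;0\;1)=1.
\]
I would invoke the lemma already used for Corollary~\ref{cor:ftc}, namely \lean{extDerivCoord_of_comp_proj}, with $f=\mathrm{id}$. Failing that, I would use \lean{fderiv_apply} or the fact that the derivative of a continuous linear map is itself (\lean{ContinuousLinearMap.fderiv} applied to \lean{ContinuousLinearMap.proj 0}). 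After that, \lean{Fin.sum_univ_one}, \lean{pow_zero}, and \lean{Pi.single_eq_same} close the goal.

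\emph{Expected obstacle.} The main friction is presenting $y\mapsto y_0$ in a form that \lean{simp} recognizes as a linear map, so that its \lean{fderiv} rewrites cleanly. The lambda \lean{fun y => y 0} may need to be rewritten explicitly as \lean{ContinuousLinearMap.proj 0} before the rewrite fires. A fallback is to prove \lean{HasFDerivAt} via \lean{hasFDerivAt_apply} and then use \lean{HasFDerivAt.fderiv}.
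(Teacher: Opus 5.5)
Your proposal is correct and matches the paper's approach: unfold \lean{extDerivCoord} to $\sum_i (-1)^i\, D\omega_i(x)(e_i)$, kill every summand with \lean{fderiv_const} when the coefficients are constant, and in the $n=0$ case evaluate the derivative of the projection $y \mapsto y_0$ on $e_0$ to get $(-1)^0\cdot 1 = 1$ (either directly via \lean{hasFDerivAt_apply}, or through \lean{extDerivCoord_of_comp_proj} with $f=\mathrm{id}$). The paper states these only as concrete sanity checks and gives no argument beyond this direct computation, so there is nothing further to compare.
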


\section{Coefficient Precomposition by Linear Maps}
\label{sec:naturality}

For coordinate forms on boxes, we additionally provide precomposition of
coefficients by continuous linear maps and prove functoriality. We emphasize
that this is \emph{coefficient-wise} precomposition
$(A \cdot \omega)_i(x) := \omega_i(Ax)$, which is \textbf{not} the true
exterior-algebraic pullback used in \Cref{sec:singular-stokes}: the true pullback
acts via the full Fr\'echet derivative and mixes coefficients, while
coefficient precomposition merely pre-composes each coefficient function
individually.

The development includes the expected lemmas:
smoothness preservation under continuous linear $A$ (\lean{pullback_smooth},
proved via \lean{ContDiff.comp});
contravariant functoriality
$(A \cdot (B \cdot \omega)) = (B \circ A) \cdot \omega$ (\lean{pullback_comp});
identity $\mathrm{id}\cdot\omega = \omega$ (\lean{pullback_id});
linearity in $\omega$ (\lean{pullback_add}, \lean{pullback_smul});
and a Stokes specialization for precomposed forms (\lean{stokes_pullback}),
which follows immediately from \lean{stokes_smooth} applied to $A\cdot\omega$.
We do \emph{not} claim that this operation commutes with the exterior derivative
in general; it satisfies Stokes only because Stokes holds for every smooth
coordinate form, regardless of its origin.

\begin{remark}
The Lean declaration names (\lean{pullback_comp}, \lean{pullback_id}, etc.)
use the term ``pullback'' for historical reasons, but these are coefficient-wise
precomposition lemmas, \emph{not} differential-form pullbacks. The true
differential-form pullback via \lean{fderiv} appears only in the singular cube
module (\Cref{sec:singular-stokes}).
\end{remark}

\section{The \texorpdfstring{$d^2{=}0$}{d²=0} Identity from \mathlib{}}
\label{sec:dd-zero}

A fundamental property of the exterior derivative is nilpotency: $d(d\omega) = 0$.
We derive this at the coordinate level through our bridge to \mathlib{}:

\begin{theorem}[$d^2{=}0$, \lean{dd_zero_abstract}]
  For any smooth $n$-form $\omega$ on $\RR^{n+1}$ (valued in
  $n$-fold continuous alternating maps $\RR^{n+1} \to \RR$),
  the iterated abstract exterior derivative vanishes:
  $\mathrm{extDeriv}(\mathrm{extDeriv}\;\omega) = 0$.
  The proof is a direct invocation of \lean{extDeriv_extDeriv} from \mathlib{}.
\end{theorem}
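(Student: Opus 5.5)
The statement is $d^2=0$: for a smooth $n$-form $\omega$ on $\RR^{n+1}$, $\mathrm{extDeriv}(\mathrm{extDeriv}\,\omega)=0$. The plan is to reduce it to \mathlib{}'s \lean{extDeriv_extDeriv}, which says the iterated exterior derivative of a form vanishes as soon as the form is $C^2$. In its current form the lemma takes a hypothesis of the shape \lean{ContDiff Real r omega} with $2 \le r$, or the corresponding \lean{ContDiffAt}/\lean{ContDiffOn} variant on an open set. So the first step is to state our theorem with the hypothesis \lean{ContDiff Real Top omega}, matching the convention of \lean{stokes_extDeriv_smooth} in \Cref{thm:unified}. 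This keeps the smoothness assumption uniform across the bridge development.

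The second step is to discharge the order side condition $2 \le \top$. This should close with \lean{le_top}, or with a short \lean{by simp}/\lean{norm_num} if the smoothness exponent lives in \lean{WithTop ENat} rather than \lean{ENat}. Mathlib recently changed the type of this exponent, so we expect some friction coercing $2$ into the right type. We would first try \lean{exact extDeriv_extDeriv homega (by simp)}. If \lean{simp} fails on the coercion, we would fall back to \lean{homega.of_le le_top}, which downgrades $\top$-smoothness to $2$-smoothness directly.

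If the \mathlib{} lemma is stated pointwise, as \lean{extDeriv (extDeriv omega) x = 0} under a hypothesis at $x$, we would first use \lean{funext} and then apply it at each $x$, obtaining the pointwise hypothesis from \lean{homega.contDiffAt}.

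No analytic content is needed here: all of the mathematics, namely symmetry of second derivatives (Schwarz) and the resulting cancellation in the alternating sum, already lives inside \lean{extDeriv_extDeriv}. The only real obstacle is therefore the bookkeeping of matching our type \lean{(Fin (n+1) -> Real) -> ... [Alt^Fin n]->L[Real] Real} to the lemma's general normed-space signature. In particular, instance resolution for \lean{extDeriv} must find the complete and normed structure on the alternating-map-valued codomain. We expect this to be automatic, since the bridge in \Cref{sec:bridge} already uses \lean{extDeriv} on exactly these types.

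Optionally, we could then transport the statement through the bridge (\Cref{thm:bridge}) to obtain a coordinate-level corollary. That corollary would say that the top coefficient of $\extd(\extd\omega)$ evaluated on the standard frame vanishes, obtained by rewriting with the abstract identity and evaluating $0$.
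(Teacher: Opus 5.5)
Your proposal is correct and takes the same route as the paper: $\mathrm{extDeriv}(\mathrm{extDeriv}\,\omega)=0$ follows from a single application of \mathlib{}'s \lean{extDeriv_extDeriv} to the globally $C^\infty$ form $\omega$, with only the smoothness-order side condition left to discharge. The extra steps you mention (the \texttt{funext} fallback, instance resolution, and the optional coordinate corollary via the bridge) are routine Lean bookkeeping around that one invocation.
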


\begin{corollary}[Vanishing of double exterior derivative integral, \lean{stokes_dd_zero}]
  For any smooth $\omega$ and vector tuple $v$:
  $\int_{\Icc{a}{b}} (\extd(\extd\omega))(x)(v)\,\dif x = 0$.
  This follows immediately from $\extd^2 = 0$: the integrand is identically zero.
\end{corollary}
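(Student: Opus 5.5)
The integrand on the left is pointwise zero, so the integral vanishes; the plan is to establish that pointwise vanishing and then discharge integrability and measurability side conditions, if any.

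\textbf{Step 1 (pointwise identity).} Apply \lean{dd_zero_abstract}, i.e.\ \mathlib{}'s \lean{extDeriv_extDeriv}. Under the global $C^\infty$ hypothesis on $\omega$, it gives $\mathrm{extDeriv}(\mathrm{extDeriv}\;\omega) = 0$ as a function $\RR^{n+1} \to \Lambda^{n+2}$. Evaluating at an arbitrary $x$ and then at the tuple $v$ yields $(\extd(\extd\omega))(x)(v) = 0$ for every $x$. Here we use that the zero continuous alternating map applied to any tuple is $0$.

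The one check in this step is that the smoothness hypothesis required by \lean{extDeriv_extDeriv} (typically \lean{ContDiff} of order at least $2$, or a \lean{minSmoothness}-style bound) follows from the $C^\infty$ hypothesis. This should be a monotonicity of \lean{ContDiff} in the order, i.e.\ \lean{ContDiff.of_le} with $2 \le \infty$.

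\textbf{Step 2 (rewrite the integral).} Rewrite the integrand using funext with the identity from Step 1, so the set integral over $\Icc{a}{b}$ becomes $\int_{\Icc{a}{b}} 0\,\dif x$. This is closed by \lean{integral_zero} (or \lean{setIntegral_zero}). No integrability hypothesis is needed, because the integrand is literally the zero function, and no use of $a \le b$ is required.

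\textbf{Main obstacle.} The only step I expect to cause friction is Step 1's type-level bookkeeping. The vector tuple $v$ must have type $\mathrm{Fin}(n{+}2) \to \RR^{n+1}$, matching the degree of $\extd\extd\omega$, and the statement of \lean{extDeriv_extDeriv} may be phrased with a smoothness hypothesis at a specific order, or with a symmetric-second-derivative hypothesis, rather than plain \lean{ContDiff}. I would first try to invoke \lean{dd_zero_abstract} directly, since it already packages this. Failing that, I would use \lean{congrFun} together with \lean{ContinuousAlternatingMap.zero_apply} to pass from equality of forms to equality of the evaluated scalars.

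Mathematically, the corollary is immediate: $\extd^2=0$ makes the integrand vanish identically.
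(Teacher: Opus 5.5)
Your proposal is correct and follows the paper's argument. Like the paper, you use \lean{dd_zero_abstract}, a direct invocation of \mathlib{}'s \lean{extDeriv_extDeriv}, to make the integrand $(\extd(\extd\omega))(x)(v)$ identically zero, and then conclude that the set integral of the zero function over $\Icc{a}{b}$ vanishes; the smoothness-order check and the evaluation-of-zero bookkeeping you flag are the only details the paper leaves implicit.
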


\subsection{Boundary Operator: Face-of-Face Identities}

We formalize the sign-arithmetic and geometric core underlying $\partial^2 = 0$
at the box level. For a box $B \subset \RR^{n+3}$,
the double boundary $\partial^2 B$ consists of codimension-2 faces, each
appearing via two paths:
\begin{itemize}
  \item Path 1: face $j$ of (face $i$ of $B$)
  \item Path 2: face $(\mathrm{predAbove}\;j\;i)$ of
    (face $(\mathrm{succAbove}\;i\;j)$ of $B$)
\end{itemize}

\begin{theorem}[Face-of-face, \lean{double_boundary_same_face}]
  The two paths give the same geometric box:
  \[
    \mathrm{face}_j(\mathrm{face}_i(B))
      = \mathrm{face}_{\mathrm{predAbove}(j,i)}\bigl(
          \mathrm{face}_{\mathrm{succAbove}(i,j)}(B)\bigr).
  \]
\end{theorem}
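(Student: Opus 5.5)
The plan is to reduce the identity to a purely combinatorial statement about index maps, and then discharge that with the \mathlib{} lemma already mentioned in connection with \Cref{thm:bdry-sq}. I take the face of a box $B=(a,b,h)$ in $\RR^{n+3}$ in direction $i:\mathrm{Fin}(n+3)$ to be the box in $\RR^{n+2}$ obtained by deleting coordinate $i$. Concretely,
\[
  \mathrm{face}_i(B).\mathrm{lo} = a\circ\mathrm{succAbove}(i),\qquad
  \mathrm{face}_i(B).\mathrm{hi} = b\circ\mathrm{succAbove}(i).
\]
The condition $\mathrm{lo}\le\mathrm{hi}$ is inherited pointwise from $h$. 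With this description, both sides of the claimed equation are boxes whose lower corners are $a$ precomposed with a composite of two $\mathrm{succAbove}$ maps, and similarly for the upper corners with $b$.

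First, I apply structure extensionality for \lean{CubicalBox}. The proof-of-order field is a \lean{Prop}, so proof irrelevance disposes of it, and only the $\mathrm{lo}$ and $\mathrm{hi}$ fields remain. For each field, I apply \lean{funext} at an index $k:\mathrm{Fin}(n+1)$ and unfold the face definition twice on each side. This reduces the goal to
\[
  a\bigl(\mathrm{succAbove}(i,\mathrm{succAbove}(j,k))\bigr)
  = a\bigl(\mathrm{succAbove}(\mathrm{succAbove}(i,j),\,\mathrm{succAbove}(\mathrm{predAbove}(j,i),k))\bigr),
\]
and to the same equation with $b$ in place of $a$.

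Both equations follow by rewriting the index with \lean{Fin.succAbove_succAbove_succAbove_predAbove}. That lemma states exactly that $\mathrm{succAbove}(\mathrm{succAbove}(i,j))\circ\mathrm{succAbove}(\mathrm{predAbove}(j,i)) = \mathrm{succAbove}(i)\circ\mathrm{succAbove}(j)$, which is the image-level fact that deleting $i$ then $j$ removes the same two coordinates as deleting the other coordinate first. If the lemma's argument order or Fin-casting convention does not match the face definition, my fallback is to prove the index identity directly. I would unfold \lean{succAbove} and \lean{predAbove} to natural-number \texttt{if}-expressions, apply \lean{Fin.ext}, and close the case split with \lean{split\_ifs <;> omega}, the same technique used for the parity lemma in \Cref{thm:bdry-sq}.

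The main obstacle I expect is bookkeeping rather than mathematics. The concrete face definition in the development may be phrased via \lean{Fin.removeNth} or \lean{Fin.insertNth} rather than $\circ\,\mathrm{succAbove}$, so the first step will be a simp lemma (\lean{Fin.removeNth_apply}) converting it to this form. There may also be dependent-type friction in matching $\mathrm{Fin}(n+3)$, $\mathrm{Fin}(n+2)$ and $\mathrm{Fin}(n+1)$ indices against the lemma's implicit arguments, which I would resolve by stating the index identity as a separate helper lemma with explicit types.
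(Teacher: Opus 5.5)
Your proposal is correct and matches the argument the paper points to. Since $\mathrm{face}_i$ on the undecorated box type just deletes coordinate $i$ from both corners, extensionality on the \texttt{lo}/\texttt{hi} fields reduces the claim to the index identity $\mathrm{succAbove}(\mathrm{succAbove}(i,j))\circ\mathrm{succAbove}(\mathrm{predAbove}(j,i))=\mathrm{succAbove}(i)\circ\mathrm{succAbove}(j)$, which is \texttt{Fin.succAbove\_succAbove\_succAbove\_predAbove}, the \mathlib{} lemma the paper cites next to its sign-cancellation argument; your \texttt{split\_ifs <;> omega} unfolding is a valid fallback, and your index types and argument order agree with the paper's conventions.
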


\begin{theorem}[Sign cancellation, \lean{double_boundary_cancels}]
  The signed coefficients for the two paths sum to zero:
  \[
    (-1)^{i+\varepsilon}(-1)^{j+\eta} + (-1)^{\mathrm{succAbove}(i,j)+\eta}
    (-1)^{\mathrm{predAbove}(j,i)+\varepsilon} = 0.
  \]
\end{theorem}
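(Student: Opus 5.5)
The identity is pure sign arithmetic, so the plan is to reduce it to the parity statement \lean{succAbove_predAbove_neg_one_pow} already used for \Cref{thm:bdry-sq}. Write $\varepsilon,\eta\in\{0,1\}$ for the orientation booleans. Then the first product is $(-1)^{i+j+\varepsilon+\eta}$, and the second is $(-1)^{\mathrm{succAbove}(i,j)+\mathrm{predAbove}(j,i)+\varepsilon+\eta}$; this uses only $(-1)^{p}(-1)^{q}=(-1)^{p+q}$ (\lean{pow_add}) and commutativity of natural-number addition. Factoring out the common $(-1)^{\varepsilon+\eta}$, which is nonzero, reduces the claim to
\[
  (-1)^{i+j} + (-1)^{\mathrm{succAbove}(i,j)+\mathrm{predAbove}(j,i)} = 0 .
\]

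Next I would invoke the parity lemma:
\[
  \mathrm{succAbove}(i,j)+\mathrm{predAbove}(j,i)\not\equiv i+j \pmod 2 .
\]
It gives $(-1)^{\mathrm{succAbove}(i,j)+\mathrm{predAbove}(j,i)} = -(-1)^{i+j}$, for instance via \lean{neg_one_pow_eq_neg_one_pow_iff} or by writing the left exponent as $i+j+2k+1$. The sum then collapses to $0$ by \lean{ring}. If the parity lemma were not available, I would prove it directly by cases. When $j<i$ (as naturals), $\mathrm{succAbove}(i,j)=j$ and $\mathrm{predAbove}(j,i)=i-1$, so the sum is $i+j-1$. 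When $j\ge i$, $\mathrm{succAbove}(i,j)=j+1$ and $\mathrm{predAbove}(j,i)=i$, so the sum is $i+j+1$. Either way the parity flips; in Lean this is \lean{split_ifs <;> omega} after unfolding to \lean{Nat} values.

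I expect the main obstacle to be the bookkeeping rather than the mathematics. One issue is casting the booleans $\varepsilon,\eta$ to natural exponents, where I would case-split on both booleans so that each exponent becomes a concrete numeral added to $i$ or $j$. The other is keeping the \lean{Fin} coercions of $\mathrm{succAbove}$ and $\mathrm{predAbove}$ in a form that \lean{omega} can digest. I would first normalize everything to statements about \lean{Nat} values, apply the parity lemma as an \lean{Even}/\lean{Odd} fact, and finish with \lean{Odd.neg_one_pow} and \lean{Even.neg_one_pow}.
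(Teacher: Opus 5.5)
Your proposal is correct and follows the paper's route. You merge the exponents with \texttt{pow\_add} and invoke the parity lemma \texttt{succAbove\_predAbove\_neg\_one\_pow}, which the paper proves as you sketch, by unfolding \texttt{succAbove}/\texttt{predAbove} to \texttt{Nat} arithmetic and closing with \texttt{split\_ifs} and \texttt{omega}; the case values you give ($i+j-1$ when $j<i$, $i+j+1$ when $j\ge i$) are exactly that lemma's content, and ring arithmetic finishes the proof.
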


Together these establish the face-of-face composition identity and
sign-arithmetic core of $\partial^2 = 0$: each
codimension-2 face appears with opposite signs in the double boundary.
However, this is \emph{not} a full $\partial^2 = 0$ at the chain level, since we do not
define $\partial$ as a chain-group homomorphism (our undecorated box type does not
distinguish upper/lower faces as separate chain elements).

\subsection{Chain-Level \texorpdfstring{$\partial^2 = 0$}{∂²=0} for Singular Cubes}
\label{sec:bdry-sq-singular}

In the smooth singular cube module, we \emph{do} prove the full chain complex identity
$\partial^2 = 0$ for singular chains (cf.~the textbook treatment in~\cite{bott-tu}).
Here the boundary operator $\partial$ is a genuine
$\ZZ$-linear map $\partial : C_{n+1}(\RR^m) \to C_n(\RR^m)$ between singular chain groups.

\begin{theorem}[$\partial^2 = 0$, \lean{bdry_bdry_chain_zero}]\label{thm:bdry2-chain}
  For any smooth singular $(n{+}2)$-cube $\sigma$:
  \[
    \partial(\partial\sigma) = 0 \quad\text{as a formal singular $n$-chain.}
  \]
\end{theorem}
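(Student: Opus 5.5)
We expand the double boundary as an explicit finite sum and cancel it with a sign-reversing, fixed-point-free involution on its index set. Since \(\partial\) is \(\ZZ\)-linear (defined on single cubes by \lean{singularBoundarySingle} and extended over \lean{Finsupp}), we have
\[
  \partial(\partial\sigma) = \sum_{i=0}^{n+1}\sum_{\varepsilon\in\{0,1\}}\sum_{j=0}^{n}\sum_{\eta\in\{0,1\}}
  (-1)^{i+\varepsilon}(-1)^{j+\eta}\,
  \mathrm{single}\bigl((\sigma\circ\iota_{i,\varepsilon})\circ\iota_{j,\eta},\,1\bigr),
\]
where the orientation sign \((-1)^{\varepsilon}\) encodes \(+\) for the face at \(1\) and \(-\) for the face at \(0\). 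We therefore index by \(T=\mathrm{Fin}(n{+}2)\times\mathrm{Bool}\times\mathrm{Fin}(n{+}1)\times\mathrm{Bool}\) and write \(\partial^2\sigma=\sum_{t\in T} s(t)\,\mathrm{single}(c(t),1)\). This reduces the statement to showing that the \lean{Finsupp}-valued sum over \(T\) vanishes.

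Next we define the involution \(\tau(i,\varepsilon,j,\eta)\) by cases.
\begin{itemize}
  \item If \(j<i\), set \(\tau(i,\varepsilon,j,\eta)=(j,\eta,i-1,\varepsilon)\).
  \item If \(j\ge i\), set \(\tau(i,\varepsilon,j,\eta)=(j+1,\eta,i,\varepsilon)\).
\end{itemize}
These are the two cases of \(\tau(i,\varepsilon,j,\eta)=(\mathrm{succAbove}(i,j),\eta,\mathrm{predAbove}(j,i),\varepsilon)\), which is how the paper pairs the two paths to a codimension-2 face. Three properties are then needed.
\begin{enumerate}
  \item \(\tau\circ\tau=\mathrm{id}\). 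Unfold \lean{succAbove}/\lean{predAbove} to natural-number arithmetic and close each case by \lean{split_ifs <;> omega}.
  \item \(\tau\) has no fixed points. If \(j<i\), a fixed point would force \(j=i\); if \(j\ge i\), it would force \(i=j+1\). Both are contradictions.
  \item \(s(\tau t)=-s(t)\). This is exactly the sign cancellation \lean{bdry_sq_sign_cancel} (\Cref{thm:bdry-sq}), with indices shifted to \(\mathrm{Fin}(n{+}2)\) and \(\mathrm{Fin}(n{+}1)\).
\end{enumerate}

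The remaining ingredient is \(c(\tau t)=c(t)\), i.e.\ the map-level face-of-face identity
\[
  \iota_{i,\varepsilon}\circ\iota_{j,\eta}
  =\iota_{\mathrm{succAbove}(i,j),\eta}\circ\iota_{\mathrm{predAbove}(j,i),\varepsilon}.
\]
We prove it by funext and a case split on the output coordinate \(k\) relative to \(i\) and \(j\). Since \lean{faceInclusion} is \lean{Fin.insertNth}, we would first look for \lean{Fin.insertNth_insertNth}-style lemmas, or adapt \lean{double_boundary_same_face} from the box setting, falling back to pointwise \lean{omega} arithmetic. Because \lean{SmoothSingularCube} is a structure with a proof field, equality of cubes follows from equality of \lean{toFun} by proof irrelevance; we would add an \lean{ext} lemma for this.

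With these facts, the sum vanishes by the standard cancellation lemma \lean{Finset.sum_involution}: each \(t\) is paired with \(\tau t\neq t\), and \(s(t)\,\mathrm{single}(c(t),1)+s(\tau t)\,\mathrm{single}(c(\tau t),1)=0\). The main obstacle is the \lean{Fin} bookkeeping: proving the insertNth commutation identity and \(\tau^2=\mathrm{id}\) with dependent \lean{Fin} casts. The sign part is already available from \Cref{thm:bdry-sq}.
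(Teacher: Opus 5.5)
Your proposal is correct and is essentially the paper's proof. It uses the same fixed-point-free involution $(i,j)\mapsto(j{+}1,i)$ for $i\le j$ and $(j,i{-}1)$ for $i>j$ on the double-boundary index set, assembled with \lean{Finset.sum_ninvolution}. It identifies paired terms as the same cube via the face-of-face identity (the paper's \lean{singularFace_comp}), and it cancels signs via $(-1)^{i+j}+(-1)^{i+j+1}=0$. Check that identity directly rather than by reindexing \lean{bdry_sq_sign_cancel}: that lemma is stated for $\mathrm{Fin}(n{+}3)\times\mathrm{Fin}(n{+}2)$, so it does not cover $n=0$.

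The only difference is bookkeeping. You put the orientation bits $\varepsilon,\eta$ into the index set and swap them under $\tau$, so each pair cancels as a single \lean{Finsupp.single} term. The paper instead indexes by $\mathrm{Fin}(n{+}2)\times\mathrm{Fin}(n{+}1)$ with all four $(\varepsilon,\eta)$ terms inside each summand, and closes the pairwise cancellation by \texttt{ext}, \texttt{simp [Finsupp.single\_apply]} and a 256-goal \texttt{split\_ifs <;> ring}.
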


\paragraph{Proof strategy.}
The proof uses a sign-reversing fixed-point-free involution on
the index set $\mathrm{Fin}(n{+}2) \times \mathrm{Fin}(n{+}1)$ of the double sum,
combined with \lean{Finset.sum_ninvolution} from \mathlib{}.

The involution maps:
\[
  g(i, j) =
  \begin{cases}
    (j{+}1,\; i) & \text{if } i \leq j \\
    (j,\; i{-}1) & \text{if } i > j
  \end{cases}
\]

\noindent
For each pair $(i,j)$ with $i \leq j$, the cancellation argument has two components:
\begin{enumerate}
  \item \textbf{Same cube}: The face-of-face identity (\lean{singularFace_comp})
    shows that $\mathrm{face}_j(\delta, \mathrm{face}_i(\varepsilon, \sigma))
    = \mathrm{face}_i(\varepsilon, \mathrm{face}_{j+1}(\delta, \sigma))$,
    so paired terms produce the same singular cube.
  \item \textbf{Opposite sign}: The coefficients $(-1)^i(-1)^j$ and
    $(-1)^{j+1}(-1)^i$ satisfy $(-1)^{i+j} + (-1)^{i+j+1} = 0$.
\end{enumerate}

After applying the face-of-face identity, the proof uses
\texttt{ext $\tau$; simp [Finsupp.single\_apply]; split\_ifs} \texttt{<;> ring}
to close all 256 goals of the split case analysis automatically
(goal count observed with Lean~4.29.1; may vary across versions).

\begin{theorem}[$\partial^2 = 0$ for chains, \lean{bdry_bdry_chain_zero_general}]
  For any singular chain $c \in C_{n+2}(\RR^m)$:
  \[
    \partial(\partial c) = 0
  \]
  The proof extends from individual cubes by induction on the chain support,
  using the $\ZZ$-linearity of $\partial$.
\end{theorem}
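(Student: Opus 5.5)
The plan is to extend \Cref{thm:bdry2-chain} from single cubes to arbitrary chains by linearity, using the fact that $\partial$ is a $\ZZ$-linear map on \lean{Finsupp}. I will realize the boundary on chains as the additive extension of \lean{singularBoundarySingle}:
\[
  \partial c = \sum_{\sigma} c_\sigma \cdot \partial\sigma .
\]
In Lean this is \lean{c.sum (fun sigma k => k smul singularBoundarySingle sigma)}, or equivalently \lean{Finsupp.lift} applied to \lean{singularBoundarySingle}. The first step is to establish the two structural lemmas $\partial(c_1+c_2)=\partial c_1+\partial c_2$ and $\partial(k\cdot\sigma)=k\cdot\partial\sigma$. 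These follow from \lean{Finsupp.sum_add_index'} and \lean{Finsupp.sum_single_index}, using that $0\cdot\partial\sigma=0$ and that scalar multiplication distributes over addition. If $\partial$ is instead packaged as an \lean{AddMonoidHom} via \lean{Finsupp.liftAddHom}, both lemmas come for free as \lean{map_add} and \lean{map_zsmul}.

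Next, I apply \lean{Finsupp.induction} to $c\in C_{n+2}(\RR^m)$ for the statement $\partial(\partial c)=0$. There are three cases.
\begin{enumerate}
  \item In the base case $c=0$, we have $\partial 0 = 0$ and then $\partial 0 = 0$ again; the empty sum closes this by \lean{Finsupp.sum_zero_index}.
  \item In the inductive case $c = \mathrm{single}\,\sigma\,k + c'$, additivity gives
  \[
    \partial\partial c = \partial\partial(\mathrm{single}\,\sigma\,k) + \partial\partial c' .
  \]
  \item For the first summand, two applications of the scalar lemma give $\partial\partial(\mathrm{single}\,\sigma\,k) = k\cdot\partial(\partial\sigma)$. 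Here I use that $\partial$ commutes with $\ZZ$-scalars on arbitrary chains, not only on singletons, which is again \lean{map_zsmul}. Then \Cref{thm:bdry2-chain} makes this term $k\cdot 0 = 0$, and the induction hypothesis kills $\partial\partial c'$.
\end{enumerate}

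The main obstacle is getting the definition of chain-level $\partial$ into a form where linearity is available uniformly. In particular, $\partial$ applied to $\partial\sigma$ must be the same operator that appears in \Cref{thm:bdry2-chain}, so that the single-cube result can be rewritten directly. If the single-cube theorem is stated with an explicit double sum rather than through the chain-level $\partial$, I will first prove a bridging lemma: chain-level $\partial$ of \lean{singularBoundarySingle}~$\sigma$ unfolds, via \lean{Finsupp.sum_finset_sum_index} and \lean{Finsupp.sum_single_index}, to that double sum. The rest is routine bookkeeping with \lean{smul_zero} and \lean{add_zero}.
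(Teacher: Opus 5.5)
Your proposal is correct and takes essentially the same approach as the paper. You treat $\partial$ on chains as the $\ZZ$-linear extension of the single-cube boundary, induct on the chain support with \texttt{Finsupp.induction}, reduce each singleton term to $k\cdot\partial(\partial\sigma)=0$ via \Cref{thm:bdry2-chain}, and let additivity and the induction hypothesis handle the remaining summand.
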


\begin{corollary}[Integral vanishing, \lean{bdry_bdry_integral_zero}]
  For any smooth singular $(n{+}2)$-cube $\sigma$ and $n$-form $\omega$:
  \[
    \int_{\partial(\partial\sigma)} \omega = 0
  \]
\end{corollary}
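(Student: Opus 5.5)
The statement is the integral-vanishing corollary, and I would derive it directly from \Cref{thm:bdry2-chain} together with the linearity of chain integration; no analysis is needed. Chain integration is presumably defined as a \lean{Finsupp.sum}, so that
\[
\int_c \omega = \sum_{\tau} c_\tau \int_\tau \omega .
\]
The plan is first to rewrite $\partial(\partial\sigma)$ using \lean{bdry_bdry_chain_zero}, which replaces the chain by the zero \lean{Finsupp}. The integral over the zero chain is the empty sum, hence $0$; in Lean this is \lean{Finsupp.sum_zero_index}.

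\textbf{Key steps.} In order:
\begin{enumerate}
\item Unfold the double boundary as the composite of the single-cube boundary \lean{singularBoundarySingle} with the $\ZZ$-linear chain boundary, matching the form in which \Cref{thm:bdry2-chain} is stated.
\item Rewrite with that theorem to obtain $\int_0 \omega$.
\item Close the goal with \lean{Finsupp.sum_zero_index}. Alternatively, use the additivity lemma \lean{integrateChain_add}: applied to $0 = 0 + 0$ it gives $\int_0\omega = 2\int_0\omega$, so $\int_0\omega = 0$.
\end{enumerate}

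Note that the statement requires no smoothness hypothesis on $\omega$. This is consistent with the argument: each \lean{integrateForm} term is a real number (a Bochner integral, which is $0$ if non-integrable), and the sum is cancelled at the level of formal coefficients, before any integral is evaluated.

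\textbf{Main obstacle.} The only real difficulty is definitional bookkeeping. The boundary of a single cube must be extended to the chain map $\partial : C_{n+1}\to C_n$ applied to $\partial\sigma$, presumably via \lean{Finsupp.sum} or \lean{Finsupp.lift}. The corollary's $\partial(\partial\sigma)$ must be syntactically the same term that \lean{bdry_bdry_chain_zero} asserts is zero. If the two forms differ, I would first prove a small unfolding lemma identifying them (\lean{Finsupp.sum_single_index} for the single-cube case), then rewrite.

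A fallback, if the zero-chain rewrite were awkward, is to expand the integral by linearity into the double sum over $\mathrm{Fin}(n{+}2)\times\mathrm{Fin}(n{+}1)$ and reapply \lean{Finset.sum_involution} with the involution $g$ from \Cref{thm:bdry2-chain}. Paired terms then cancel because \lean{singularFace_comp} gives equal cubes, hence equal integrals, with opposite signs. This, however, duplicates work already done at the chain level.
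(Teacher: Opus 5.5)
Your proposal is correct and follows the paper's route. The paper states this corollary directly after \texttt{bdry\_bdry\_chain\_zero}, and it follows by rewriting $\partial(\partial\sigma)$ to the zero chain and noting that integration over the zero \texttt{Finsupp} vanishes (empty sum, or equivalently by additivity), with no analytic input and no smoothness of $\omega$ required.
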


This establishes that $(\{C_n(\RR^m)\}, \partial)$ forms a genuine chain complex.
Combined with \lean{stokes_chain}, this yields the identity one would use to
define integration as a cochain map from the de~Rham complex to singular cubical
cochains; however, we do not formally package these complexes or prove the
commuting-square diagram as a Lean structure.

\section{Architecture and Infrastructure}
\label{sec:architecture}

\subsection{Module Structure}

The formalization consists of 44 Lean~4 source files in the \texttt{LeanStokes/}
directory totaling 4028 source lines, plus a root import file
\texttt{LeanStokes.lean}. Selected modules and their roles:

\begin{center}
\begin{small}
\begin{tabular}{lll}
\toprule
\textbf{Module} & \textbf{Content} & \textbf{Key Theorems} \\
\midrule
\multicolumn{3}{l}{\textit{Singular Cube Stokes}} \\
\texttt{.../Defs} & Singular cubes, faces & \lean{SmoothSingularCube}, \lean{singularFace} \\
\texttt{.../Pullback} & True pullback via fderiv & \lean{pullbackForm}, \lean{integrateForm} \\
\texttt{.../Smoothness} & Pullback smoothness & \lean{toCoordNForm_pullback_isSmooth} \\
\texttt{.../FaceMatching} & Face matching identity & \lean{face_matching} \\
\texttt{.../Theorem} & Main singular Stokes & \lean{singularStokes}, \lean{pullback_extDeriv} \\
\texttt{.../Chain} & Singular chains & \lean{stokes_singular_boundary} \\
\texttt{.../BdryBdry} & $\partial^2{=}0$ chain-level & \lean{bdry_bdry_chain_zero} \\
\midrule
\multicolumn{3}{l}{\textit{Box Stokes Infrastructure}} \\
\texttt{Defs} & Core definitions & \lean{CoordNForm}, \lean{extDerivCoord} \\
\texttt{Theorem} & Main box identity & \lean{stokes_on_box} \\
\texttt{Smooth} & Smooth-form version & \lean{stokes_smooth} \\
\texttt{Classical} & Dim.\ specializations & \lean{green_on_rectangle} \\
\texttt{Bridge} & extDeriv connection & \lean{extDeriv_topCoeff_eq_extDerivCoord} \\
\texttt{Unified} & Combined statement & \lean{stokes_extDeriv_smooth} \\
\texttt{FTC} & Fund.\ Thm.\ of Calculus & \lean{ftc_stokes} \\
\texttt{Green} & Green's theorem & \lean{green_stokes} \\
\texttt{Divergence} & Divergence consistency & \lean{divergence_stokes} \\
\texttt{IBP} & Integration by parts & \lean{integration_by_parts} \\
\texttt{Chains} & Formal sums + linearity & \lean{stokes_chain} \\
\texttt{Subdivision} & Two-box decomposition & \lean{subdivision_stokes_equiv} \\
\texttt{Naturality} & Coeff.\ precomp.\ functoriality & \lean{pullback_comp} \\
\texttt{DDZero} & $d^2{=}0$ from \mathlib{} & \lean{dd_zero_abstract} \\
\texttt{BdryOp} & $\partial^2{=}0$ sign arith. & \lean{bdry_sq_sign_cancel} \\
\texttt{Gauss3D} & 3D divergence theorem & \lean{gauss_3d} \\
\texttt{Demo} & Usage examples & 4D/5D/10D Stokes \\
\texttt{Check} & Axiom verification & 79 \lean{\#print axioms} calls \\
\bottomrule
\end{tabular}
\end{small}
\end{center}

All modules are sorry-free.

\subsection{Design Decisions}

\paragraph{Dimension indexing.}
We use \lean{Fin (n + 1)} for the ambient dimension, avoiding the subtraction
issue where \texttt{(d - 1) + 1} $\neq$ \texttt{d} in natural number arithmetic.

\paragraph{Signed coefficients.}
The key insight enabling the 3-line proof is defining
$f_i(x) = (-1)^i \omega_i(x)$, so that:
\begin{itemize}
  \item $\mathrm{div}(f) = \sum_i \partial f_i / \partial x_i$, which equals the
    single coefficient of the top-form $\extd\omega$
  \item The divergence theorem's boundary terms match our boundary integral
\end{itemize}

\paragraph{Separation of concerns.}
The general theorem (\lean{stokes_on_box}) takes explicit hypotheses about
continuity, differentiability, and integrability. The smooth version
(\lean{stokes_smooth}) derives these automatically from \lean{ContDiff Real Top}.

\subsection{Axiom Audit}

Every key theorem was verified via \lean{\#print axioms} (79 declarations
checked, covering all main results and their transitive dependencies).
All produce the same axiom set:
\begin{itemize}
  \item \texttt{propext} (propositional extensionality)
  \item \texttt{Classical.choice} (classical logic)
  \item \texttt{Quot.sound} (quotient soundness)
\end{itemize}
These are the standard axioms of Lean~4's foundation, present in all
nontrivial \mathlib{} proofs. No \texttt{sorry} or additional axioms are used.

\section{Formalization Insights and Challenges}
\label{sec:insights}

This section describes what the formalization process revealed about the mathematics
of Stokes' theorem and cubical chains. We present both ``what was proved'' and
``what was learned,'' following established practice in formalization papers~\cite{best-flt,van-doorn-itp}.

\subsection{The Bridge Pattern: Connecting Concrete and Abstract}

The most reusable methodological contribution of this work is the \emph{bridge theorem}
pattern. \mathlib{}'s \lean{extDeriv} operates on \lean{ContinuousAlternatingMap}-valued
functions in full generality, while our box-level infrastructure uses coordinate-indexed
functions $\omega_i : \RR^{n+1} \to \RR$. The bridge theorem
(\lean{stokes_extDeriv_smooth}, \Cref{sec:bridge}) proves that evaluating the abstract
\lean{extDeriv} on standard basis vectors recovers our coordinate formula.

This pattern---where a concrete computational representation is validated against
an abstract library definition---may be reusable wherever formalizers face the choice
between ``build from scratch in a concrete representation'' and ``use the library's
abstract API directly''~\cite{van-doorn-measure}. The bridge makes both approaches compatible.

\subsection{Pitfall: Nat Subtraction in Sign Exponents}

A recurring obstacle in the chain-level $\partial^2 = 0$ proof was Lean~4's treatment
of natural-number subtraction in exponents. The ring tactic cannot simplify
$(-1)^{n-1}$ when $n$ is a natural-number variable, because $n - 1$ is not a
polynomial operation (it truncates to 0 when $n = 0$).

Our solution was the helper lemma \lean{neg_one_pow_pred}: for $n > 0$,
$(-1 : \ZZ)^{n-1} = -(-1)^n$. This allows \lean{simp only [neg_one_pow_pred ...]}
to preprocess sign expressions before \lean{ring} closes the goal.
This illustrates a general principle: \emph{ring-normalizable expressions must avoid
Nat subtraction}, and dedicated lemmas are needed to bridge the gap.

\subsection{Pitfall: Fin Projection Opacity}

Lean~4's kernel does not always reduce $(\langle e, h\rangle : \mathrm{Fin}\;n).val$
to $e$ during unification. This caused repeated failures when proving equalities
involving composed face maps (e.g., $\mathrm{faceInclusion}\;i\;\varepsilon \circ
\mathrm{faceInclusion}\;j\;\delta$). The solution is explicit use of
\lean{Fin.val_mk} or \lean{dsimp} to force projection reduction before
\lean{omega} can close the arithmetic goal.

\subsection{Insight: Why Global Smoothness Simplifies Everything}

We require singular cubes to be globally $C^\infty$ (not just smooth on $[0,1]^{n+1}$).
This is mathematically stronger than necessary but has a cascade of simplification effects:
\begin{enumerate}
  \item \lean{ContDiff} composes freely: $\sigma \circ \iota_{i,\varepsilon}$ is
    automatically $C^\infty$ without boundary regularity arguments.
  \item \lean{fderiv} is well-defined everywhere, avoiding the need for
    \lean{HasFDerivAt} hypotheses localized to the cube interior.
  \item The pullback $\sigma^*\omega$ is smooth on all of $\RR^{n+1}$, so
    box Stokes applies directly without restriction arguments.
\end{enumerate}
A formalization that required smoothness only on the cube would need
\lean{ContDiffOn} (which composes less freely) and careful bookkeeping of
open neighborhoods---substantial additional infrastructure with no
mathematical payoff for the theorem statement.

\subsection{Insight: extDeriv\_pullback as the Key Enabler}

Without \mathlib{}'s \lean{extDeriv_pullback}, the singular Stokes theorem
would have been substantially harder. That single theorem provides
$d(\sigma^*\omega) = \sigma^*(d\omega)$ pointwise and reduces singular Stokes
to box Stokes in three steps: (1)~apply naturality, (2)~apply box Stokes to
the pullback, (3)~match faces. Without it we would have had to reprove
naturality of the exterior derivative under pullback---a substantial
undertaking involving the Leibniz rule for wedge products and the
composition of multilinear maps.

\subsection{The Involution Method for \texorpdfstring{$\partial^2 = 0$}{∂²=0}}

The chain-level $\partial^2 = 0$ proof uses \lean{Finset.sum_ninvolution}: to show
$\sum_{p \in S} f(p) = 0$, it suffices to find a fixed-point-free involution
$g : S \to S$ such that $f(p) + f(g(p)) = 0$ for all $p$. This is a
standard combinatorial technique, but its application to cubical boundary algebra
required careful construction: the involution $g(i,j) = (j{+}1, i)$ when $i \leq j$
and $(j, i{-}1)$ otherwise produces 256 case-split goals in the cancellation
proof, all of which close with \lean{ring} after sign preprocessing.

The mathematical lesson is that $\partial^2 = 0$ for cubical chains is fundamentally
a combinatorial identity (sign-reversing involution on index pairs) rather than
an algebraic identity (in the sense of ring operations). The formal proof makes
this distinction precise in a way informal proofs typically obscure.

\section{Discussion}
\label{sec:discussion}

\subsection{Scope and Limitations}

Our formalization proves Stokes' theorem at two levels: (1)~for \emph{smooth singular
cubes}---globally $C^\infty$ maps $\sigma : \RR^{n+1} \to \RR^m$ integrated over
$[0,1]^{n+1}$ with true pullback via \lean{fderiv}---and (2)~for axis-aligned
boxes and their formal $\ZZ$-linear combinations. We state the concrete limitations
explicitly:

\begin{itemize}
  \item \textbf{No manifolds or partitions of unity}: The integration domain is always
    $[0,1]^{n+1}$ (or $\Icc{a}{b}$ for boxes). No charts, atlases, smooth structures,
    or partition-of-unity arguments are used. Singular cubes must be single smooth
    parametrizations.
  \item \textbf{Global smoothness}: Singular cubes are required to be $C^\infty$ on
    all of $\RR^{n+1}$, not just on $[0,1]^{n+1}$. This is stronger than necessary
    mathematically, but simplifies the formalization.
  \item \textbf{Box-level form degree}: The box-level infrastructure handles only
    coordinate $n$-forms on $\RR^{n+1}$ (one degree below top). The pullback
    construction is defined for arbitrary form degree~$k$; the singular Stokes
    theorem applies to $n$-forms integrated over smooth singular $(n{+}1)$-cubes.
  \item \textbf{Box Stokes reduces to divergence theorem}: The core box theorem reduces
    to \mathlib{}'s divergence theorem.
    The mathematical content is in the sign-convention construction, the bridge to
    \lean{extDeriv}, and the singular-cube extension via \lean{extDeriv_pullback}.
  \item \textbf{No image-domain semantics}: $\int_\sigma \omega$ is defined as an integral
    over $[0,1]^d$ of the pullback. We do not prove that $\sigma([0,1]^d)$ is a
    submanifold or oriented geometric chain. The singular-cube module does achieve
    a genuine chain-complex structure: $\partial^2 = 0$ is proved as a Finsupp identity
    (\lean{bdry_bdry_chain_zero}).
\end{itemize}

Despite these limitations, the formalization provides:
\begin{enumerate}
  \item \textbf{True pullback} of differential forms via the Fr\'echet derivative---the
    standard differential-geometric operation.
  \item \textbf{Smooth singular parametrizations}: Singular cubes provide smooth parametrizations
    of curves, surfaces, and higher-dimensional objects in $\RR^m$, not just axis-aligned boxes.
    We do not prove image-domain theory (e.g., that the image is a submanifold).
  \item Genuinely $n$-dimensional results for arbitrary~$n$ (instantiated in demos up to $n{=}9$, i.e., 10D).
  \item A \textbf{bridge theorem} validated against \mathlib{}'s abstract \lean{extDeriv}.
  \item \textbf{Two-box additivity} with shared-face cancellation showing composability.
  \item \textbf{1D agreement check} confirming correct integral values for the FTC case.
  \item \textbf{Reusable infrastructure}: the singular-cube framework, face matching,
    coordinate forms, and the bridge pattern may serve as building blocks for
    further development.
\end{enumerate}

\subsection{What Remains for Full Manifold Stokes}

A formalization of Stokes on smooth manifolds with boundary~\cite{lee-manifolds} would additionally require:
\begin{enumerate}
  \item \textbf{Integration of forms on manifolds}: defining $\int_M \omega$ via
    partition of unity and chart pullbacks.
  \item \textbf{Boundary orientation}: the induced orientation on $\bdry M$ from
    the outward normal.
  \item \textbf{Change of variables for forms}: proving that the integral is
    chart-independent.
  \item \textbf{Stokes on half-spaces}: the local compactly-supported version.
  \item \textbf{Singular chains on manifolds}: extending our singular cube
    infrastructure to manifold-valued maps.
\end{enumerate}
Each of these is a substantial formalization project in its own right. Our singular
cube framework provides a natural starting point: once manifold charts and partition
of unity are available, singular cubes may serve as the local parametrizations
over which Stokes is applied.

\subsection{Comparison with HOL Light}
\label{sec:harrison}

Harrison's HOL Light formalization~\cite{harrison-stokes} is the closest prior work.
A key finding of our investigation is that Harrison's development proves Stokes for
\emph{convex and polyhedral domains} extended to polyhedral chains---\emph{not} for
arbitrary smooth manifolds with boundary, as some secondary sources suggest.

With the addition of singular cubes, the mathematical scope of our development is
\emph{complementary to Harrison's}:
Harrison's singular simplices/polytopes are parametrized by smooth maps from
convex domains; our singular cubes are parametrized by smooth maps from the unit cube.
Harrison has strictly broader geometric domain coverage (convex polytopes with simplicial
decomposition); our contribution is a Lean~4/\mathlib{} formalization with direct bridge
to \lean{extDeriv} and reuse of \lean{extDeriv_pullback}.
Both use the true differential-form pullback.

\begin{table}[h]
\centering
\begin{tabular}{lll}
\toprule
\textbf{Aspect} & \textbf{Harrison (HOL Light)} & \textbf{This work (Lean~4)} \\
\midrule
Domain & Convex polytopes & Singular cubes + boxes \\
Parametrization & Smooth maps on polytopes & $C^\infty$ maps $\sigma : \RR^{n+1} \to \RR^m$ \\
Pullback & True (smooth maps) & True (via \lean{fderiv}) \\
Chains & Polyhedral chains & $\ZZ$-singular + cubical chains \\
Integration & Gauge (HK) & Bochner (set integral) \\
Ext.\ derivative & Custom definition & Via divergence + \lean{extDeriv_pullback} \\
$d^2{=}0$ & Algebraic & Via \lean{extDeriv_extDeriv} + chain involution \\
Bridge & None & To \lean{extDeriv} \\
Type theory & Simple (HOL) & Dependent (Lean~4) \\
Library reuse & Self-contained & Reduces to \mathlib{} \\
\bottomrule
\end{tabular}
\caption{Structural comparison between the two formalizations.}
\end{table}

The distinguishing feature of our approach is the bridge to \mathlib{}'s abstract exterior
algebra and the use of \lean{extDeriv_pullback} for the singular theorem. We are not
aware of a directly comparable bridge to an independently developed abstract
exterior-derivative API in HOL~Light, since HOL~Light does not maintain an independent
abstract exterior algebra alongside its integration theory.
This means our infrastructure is directly connected to the abstract theory and may
benefit from future growth in \mathlib{}'s manifold integration API.

Harrison's advantage is strictly broader domain generality: convex polytopes include
simplices and other non-cubical shapes. Our singular cubes always have the unit cube as the
parameter domain; the smooth map $\sigma$ can produce many parametrized images
(possibly degenerate or self-overlapping), but the theorem remains a statement about
integration over the parameter cube, not over the image as a domain.

\subsection{Reproducibility}

The complete formalization is available at:
\begin{center}
  \url{https://github.com/d0d1/lean-stokes-theorem}
\end{center}
A Software Heritage archive (SWHID) will be deposited prior to publication.
It builds with Lean~4.29.1 and \mathlib{} (pinned to \texttt{rev = "v4.29.1"} in
\texttt{lakefile.toml}; both version pins are in the repository). To verify:
\begin{verbatim}
  lake exe cache get && lake build
\end{verbatim}
Representative \lean{\#print axioms} output (all checked theorems produce
the same three axioms):
\begin{verbatim}
'SingularCubeStokes.singularStokes' depends on axioms:
  [propext, Classical.choice, Quot.sound]
'CubeStokes.stokes_on_box' depends on axioms:
  [propext, Classical.choice, Quot.sound]
'SingularCubeStokes.stokes_singular_boundary' depends on axioms:
  [propext, Classical.choice, Quot.sound]
\end{verbatim}

\section{Conclusion}

We have presented a sorry-free Lean~4 formalization of Stokes' theorem for smooth
singular cubes and axis-aligned boxes in arbitrary dimension, comprising 44 modules
with 205 named declarations across approximately 4000 source lines. All results
depend only on the standard axioms (\texttt{propext}, \texttt{Classical.choice},
\texttt{Quot.sound}), confirmed by 79 \lean{\#print axioms} checks covering
both the box and singular cubical layers.

The headline theorem---Stokes for smooth singular cubes with \emph{true}
differential-form pullback---reduces to the box theorem in three steps:
naturality of the pullback under the exterior derivative
($\extd(\sigma^*\omega) = \sigma^*(\extd\omega)$, from \mathlib{}'s
\lean{extDeriv_pullback}), box Stokes applied to $\sigma^*\omega$, and a
chain-rule face-matching identity. The same machinery extends by $\ZZ$-linearity
to singular chains, on which we prove $\partial(\partial\sigma)=0$ via a
sign-reversing involution on the double-boundary index set.

\paragraph{What we hope is reusable.}
Three ingredients seem most likely to be useful for downstream formalizations:
(i)~the \emph{bridge theorem} pattern, which validates a concrete coordinate
formula against an abstract library definition; (ii)~the \emph{face matching
identity} relating box-boundary integrals of pullback forms to face integrals
of singular cubes; and (iii)~the \emph{involution-based} proof of $\partial^2 = 0$
on singular cubical chains.

\paragraph{What remains.}
A full Lean~4 formalization of Stokes for smooth manifolds with boundary still
requires integration of forms via partition of unity and chart pullbacks,
boundary orientation, and image-domain singular chains. Our singular-cube
framework provides a natural local building block for such an effort once
\mathlib{}'s manifold integration API matures.

The artifact is reproducible via \texttt{lake exe cache get \&\& lake build}
and is available at \url{https://github.com/d0d1/lean-stokes-theorem} under
GPL-3.0-only, with versions pinned via \texttt{lean-toolchain} and
\texttt{lakefile.toml}.

\paragraph{Acknowledgments.}
We thank the \mathlib{} community for maintaining the mathematical library on which
this work depends, and the Lean~4 developers for the proof assistant infrastructure.

\end{document}